\documentclass{article}


\usepackage[preprint]{neurips_2025}




\usepackage[utf8]{inputenc} 
\usepackage[T1]{fontenc}    
\usepackage{hyperref}       
\usepackage{url}            
\usepackage{booktabs}       
\usepackage{amsfonts}       
\usepackage{amsthm}         
\usepackage{nicefrac}       
\usepackage{microtype}      
\usepackage{xcolor}         

\usepackage{multirow} 
\usepackage{graphicx} 
\usepackage{array} 
\usepackage{wrapfig} 
\usepackage{bm}
\usepackage{enumitem}
\usepackage{amssymb}
\usepackage{amsmath}
\usepackage{amsthm}

\usepackage{algorithm}
\usepackage{algorithmic}
\usepackage{ulem}

\newtheorem{theorem}{Theorem}[section]
\newtheorem{lemma}[theorem]{Lemma}
\newtheorem{remark}[theorem]{Remark}

\title{Beyond Personalization: Federated Recommendation with Calibration via Low-rank Decomposition}

%

\author{\hspace{-0.6cm}\textbf{Jundong Chen}\textsuperscript{1,2},\textbf{Honglei Zhang}\textsuperscript{1,2},\textbf{Haoxuan Li}\textsuperscript{3},\textbf{Chunxu Zhang}\textsuperscript{4},\textbf{Zhiwei Li}\textsuperscript{5},\textbf{Yidong Li}\textsuperscript{1,2}\thanks{Corresponding author.} \\
\hspace{-0.5cm}\textsuperscript{1}Key Laboratory of Big Data \& Artificial Intelligence in Transportation, Ministry of Education,\\
\hspace{-0.4cm}\textsuperscript{2}Beijing Jiaotong University, 
\textsuperscript{3}Peking University, \textsuperscript{4}Jilin University, \textsuperscript{5}University of Technology Sydney\\
\hspace{-0.4cm}\texttt{jundongchen@bjtu.edu.cn},\hspace{0.2cm}\texttt{ydli@bjtu.edu.cn}
}

\begin{document}

\maketitle

\begin{abstract}
  Federated recommendation (FR) is a promising paradigm to protect user privacy in recommender systems. Distinct from general federated scenarios, FR inherently needs to preserve client-specific parameters, \textit{i.e.}, user embeddings, for privacy and personalization. However, we empirically find that globally aggregated item embeddings can induce skew in user embeddings, resulting in suboptimal performance. To this end, we theoretically analyze the \textit{user embedding skew} issue and propose \textbf{P}ersonalized \textbf{Fed}erated recommendation with \textbf{C}alibration via \textbf{L}ow-\textbf{R}ank decomposition (\textbf{PFedCLR}). Specifically, PFedCLR introduces an integrated dual-function mechanism, implemented with a buffer matrix, to jointly calibrate local user embedding and personalize global item embeddings. To ensure efficiency, we employ a low-rank decomposition of the buffer matrix to reduce the model overhead. Furthermore, for privacy, we train and upload the local model before personalization, preventing the server from accessing sensitive information. Extensive experiments demonstrate that PFedCLR effectively mitigates user embedding skew and achieves a desirable trade-off among performance, efficiency, and privacy, outperforming state-of-the-art (SOTA) methods. Our code is available at \url{https://github.com/jundongchen13/PFedCLR}.
\end{abstract}

\section{Introduction}\label{sec:intro}
Federated recommendation (FR) has emerged as a new architecture to provide customized recommendation services while preserving user privacy~\cite{sun2024survey, yin2024device, zhang2025personalized}. Unlike general federated learning (FL), FR inherently requires client-specific parameters to ensure personalization and privacy~\cite{ammad2019federated,chai2020secure}. Specifically, typical FR methods, such as FedMF~\cite{chai2020secure} and FedNCF~\cite{perifanis2022federated}, treat item embeddings as the global model while keeping user embedding as the local variable, as illustrated in Figure~\ref{pic:method_comparison} (a). However, such FR methods of assigning one same global model to all clients are suboptimal, since there exists natural heterogeneity across clients, \textit{e.g.}, non-independent and identical distribution (Non-IID) of their interaction data~\cite{sun2024survey, zhang2023lightfr}. Hence, personalized federated recommendation (pFR) has been developed to tailor the personalized model for each client~\cite{zhang2023dual, jiang2024tutorial}. 

Conceptually, pFR incorporates the personalization mechanism to derive a personalized global model for each client, as illustrated in Figure~\ref{pic:method_comparison} (b). Existing personalization mechanisms can be categorized into two types~\cite{zheng2025confree}: server-side global aggregation and client-side local adaptation. From the server, pFedGraph~\cite{ye2023personalized} optimizes a unique global model for each client based on similarity, and GPFedRec~\cite{zhang2024gpfedrec} proposes graph-guided aggregation to yield user-specific item embeddings. From the client, PFedRec~\cite{zhang2023dual} allows dual personalization for score function and item embeddings, and FedRAP~\cite{li2023federated} adapts global model by applying an additive local model. These methods all personalize the global item embeddings, effectively enhancing the customized recommendations for each client.

\begin{figure*}[t]
\centering
\includegraphics[width=0.8\textwidth]{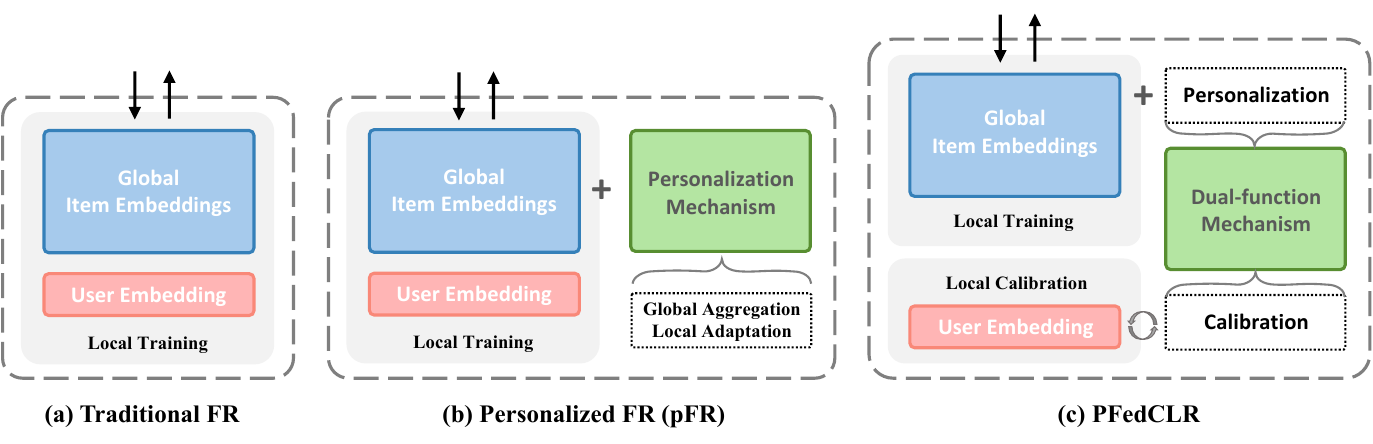} 
\caption{Different frameworks for federated recommendation (FR). While existing pFR methods mainly focus on the personalization of global item embeddings, PFedCLR additionally achieves the calibration of local user embedding.}
\label{pic:method_comparison}
\end{figure*}

\begin{wrapfigure}{r}{0.54\textwidth}
\centering
\includegraphics[width=0.54\textwidth]{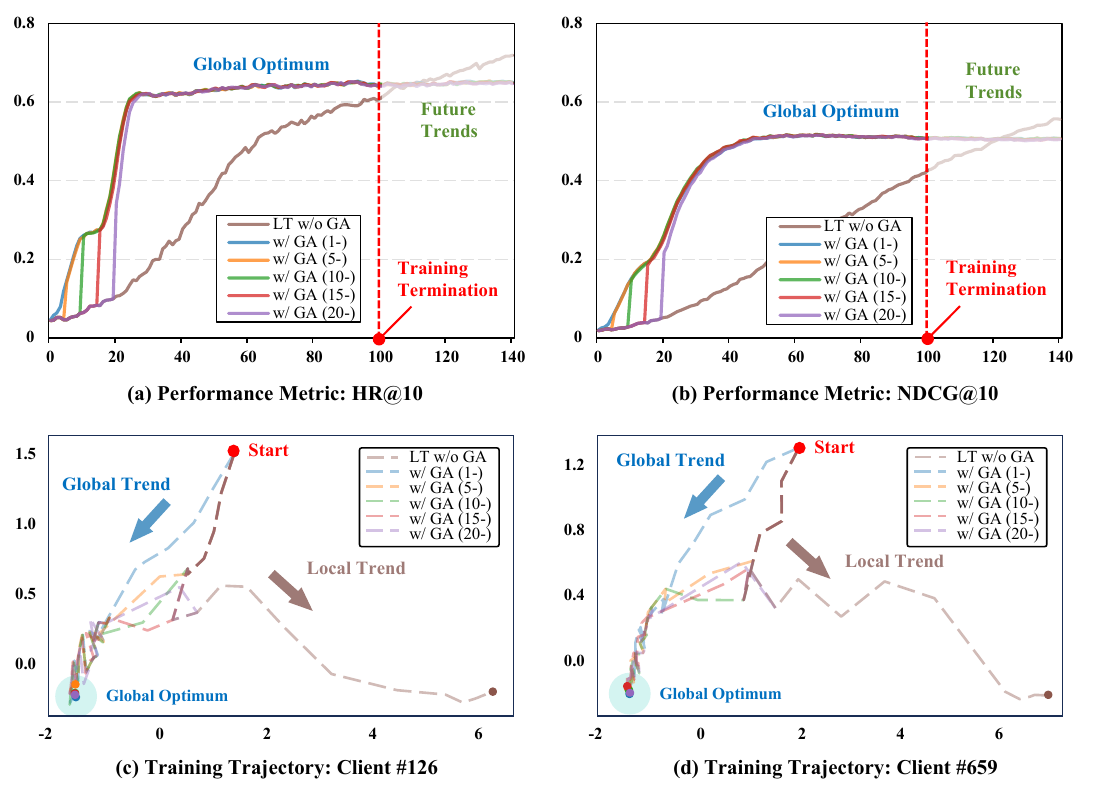} 
\vspace{-0.4cm}
\caption{\textbf{Motivation.} "LT w/o GA" denotes only local training without global aggregation, while "w/ GA ($i$-)" denotes that global aggregation is performed from round $i$ onwards, rather than local training only.}
\vspace{-0.2cm}
\label{pic:pre_exp}
\end{wrapfigure}

However, we empirically find that the globally aggregated item embeddings can distort the training trajectories of local user embedding, leading to suboptimal solutions. In the experiment, we start global aggregation at different global rounds and set the total number of global rounds to 100, following the mainstream methods~\cite{chai2020secure,perifanis2022federated,zhang2023dual,li2023federated}. Additionally, 40 extra rounds are conducted to observe long-term trends. As the performance shown in Figure~\ref{pic:pre_exp} (a)(b), we find that: I) Under the standard setting of 100 global rounds, the groups with global aggregation converge to the same global optimum\footnote{In this paper, "global optimum" and "local optimum" refer to collaborative optimum under global aggregation and individual optimum under local training, respectively, rather than the concepts in optimization theory.}, achieving better performance than the one with local training only. II) From future trends, the group without global aggregation can achieve superior performance. Considering the specificity of FR scenarios, we further visualize the t-SNE training trajectories of local user embeddings. As shown by the two examples in Figure~\ref{pic:pre_exp} (c)(d), the user embeddings follow a consistent updating trend after starting global aggregation, which differs significantly from the only local training trend. More experimental details can be found in Appendix~\ref{app:pre_exp}. Based on the observations above, we argue that the item embeddings after global aggregation affect the optimization direction of the local user embedding, potentially leading to suboptimal performance, \textit{i.e.}, the \textit{user embedding skew} issue. Neither traditional FR nor pFR methods take this into account.

To address this issue, we propose a calibrated pFR method called \textbf{P}ersonalized \textbf{Fed}erated recommendation with \textbf{C}alibration via \textbf{L}ow-\textbf{R}ank decomposition (\textbf{PFedCLR}), as shown in Figure~\ref{pic:method_comparison} (c). Beyond the personalization mechanism of mainstream pFR methods, PFedCLR incorporates a calibration mechanism to mitigate the effect of global information from aggregation on local user embeddings. Specifically, our key contributions are as follows:
\begin{itemize}[left=0pt, itemsep=-1pt]
    \item To the best of our knowledge, we first identify the \textit{user embedding skew} issue specific to FR, and provide an analysis from both experimental and theoretical perspectives.
    \item We propose an integrated dual-function mechanism to achieve both the calibration of local user embedding and the personalization of global item embeddings. Implemented via a buffer matrix, this mechanism effectively mitigates the previously overlooked \textit{user embedding skew} issue.
    \item For efficiency, PFedCLR employs a low-rank decomposition of the buffer matrix to achieve a lightweight and efficient design. For privacy, the local model is uploaded before personalization, preventing the server from accessing sensitive user information.
    \item Through extensive experiments on five real-world datasets, we demonstrate that PFedCLR outperforms other state-of-the-art (SOTA) methods, while ensuring both efficiency and privacy.
\end{itemize}

\section{Related Work}
\subsection{Personalized federated learning}
Personalized federated learning (pFL) has emerged as a critical research direction to mitigate the heterogeneity across clients in FL. Per-FedAvg~\cite{fallah2020personalized} and pFedMe~\cite{t2020personalized} integrate Model-Agnostic Meta-Learning (MAML)~\cite{finn2017model} with FedAvg~\cite{mcmahan2017communication} to optimize a globally shared initialization, aiming to accelerate and enhance personalization on individual clients. Besides, pFedHN~\cite{shamsian2021personalized} learns a hyper-network to generate personalized models for each client. FedMD~\cite{li2019fedmd}, FedDF~\cite{lin2020ensemble}, and FedKT~\cite{li2020practical} adopt knowledge distillation frameworks to transfer globally shared public knowledge to local clients. Distinct from the global perspectives above, other works focus on local adaptation. Ditto~\cite{li2021ditto} introduces an additional module for each client to enable personalized adjustments on the global model. In FedRep~\cite{collins2021exploiting}, clients share a global feature extractor but learn their own local heads for personalized prediction. Additionally, several methods achieve personalization by combining the global model with the local one, such as FedALA~\cite{zhang2023fedala}, FedPHP~\cite{li2021fedphp}, and APFL~\cite{deng2020adaptive}.
\subsection{Federated recommendation}

Pioneering works such as FedMF~\cite{chai2020secure} and FedNCF~\cite{perifanis2022federated} adapt the representative centralized models, \textit{i.e.}, Matrix Factorization (MF)~\cite{koren2009matrix} and Neural Collaborative Filtering (NCF)~\cite{he2017neural}, to the federated framework, respectively. Given the natural heterogeneity among clients, some efforts have been made in personalized federated recommendation (pFR). From a global perspective, GPFedRec~\cite{zhang2024gpfedrec} conducts the graph-guided aggregation to recover correlations among users, learning user-specific item embeddings. Besides, pFedGraph~\cite{ye2023personalized} performs global optimization based on similarity, deriving unique global models for each client. From a local view, PFedRec~\cite{zhang2023dual} designs a novel dual personalization mechanism to capture user preferences and refine global item embeddings. FedRAP~\cite{li2023federated} personalizes the global model locally by using an additional model. However, existing pFR methods focus on the personalization of global item embeddings, yet ignore the skew of local user embeddings caused by global aggregation, yielding suboptimal performance. To address this, we propose PFedCLR based on low-rank techniques to take both aspects into account. In contrast to LoRA-like methods~\cite{hu2022lora,nguyen2024towards} that are designed for efficient fine-tuning, PFedCLR introduces a dual-function mechanism tailored for FR scenario, which jointly achieves personalization and calibration, delivering a more effective and efficient solution.


\section{Problem definition}
\textbf{Notations.} Let $\mathcal{U}$ denote the set with $n$ users/clients, $\mathcal{I}$ the set with $m$ items. Each client $u\in\mathcal{U}$ keeps a local dataset  $\mathcal{D}_{u} = \{(u,i,r_{ui} | i\in\mathcal{I}_u)\}$, where $\mathcal{I}_u$ is the observed items of client $u$, $r_{ui}$ is the label. Particularly, we consider the typical recommendation task with implicit feedback, that is, $r_{ui}=1$ if client $u$ has interacted with item $i$, and $r_{ui}=0$ otherwise. Here, following mainstream methods~\cite{ammad2019federated,zhang2023dual,zhang2024beyond,li2025personalized,ding2024fedloca}, we adopt the embedding-based FR framework. Assuming the embedding dimension is $d$, client $u$ updates the user embedding $\mathbf{p}_u\in\mathbb{R}^d$ and item embeddings $\mathbf{Q}_u\in\mathbb{R}^{m \times d}$, while the server aggregates local models $\{\mathbf{Q}_u\}_{u\in\mathcal{U}}$ to obtain the global model $\mathbf{Q}_g$.

\textbf{FR objective.} The goal of FR is to predict $\hat{r}_{ui}$ of client $u$ for each unobserved item $i \in \mathcal{I} \setminus \mathcal{I}_u$ on local devices. Formally, the global optimization objective of FR tasks is
\begin{equation}\label{eq:frs_fl}
\min _{\left(\mathbf{p}_1, \mathbf{p}_2,\cdots, \mathbf{p}_n; \mathbf{Q}_1, \mathbf{Q}_2,\cdots, \mathbf{Q}_n\right)} \sum_{u=1}^n p_u \mathcal{L}_u\left(\mathbf{p}_u, \mathbf{Q}_u ; \mathcal{D}_u\right),
\end{equation}
\noindent where $p_u$ denotes the weight assigned to $\mathbf{Q}_u$ for global aggregation, \textit{e.g.}, $p_u=|\mathcal{D}_u|/ \sum_{v=1}^n|\mathcal{D}_v|$ in FedAvg \cite{mcmahan2017communication} and $p_u=1/n$ in FCF~\cite{ammad2019federated}. $\mathcal{L}_u$ is the local objective to facilitate the prediction of rating $\hat{r}_{ui}$. In this work, we employ the binary cross-entropy loss (BCE)~\cite{he2017neural} as $\mathcal{L}_u$ for the task with implicit feedback. Formally, the local optimization objective is defined as follows,
\begin{equation}\label{eq:bce}
    \mathcal{L}_{u}(\mathbf{p}_u, \mathbf{Q}_u ; \mathcal{D}_u)=-\sum_{(u, i, r_{u i}) \in \mathcal{D}_u} r_{u i} \log \hat{r}_{u i}+\left(1-r_{u i}\right) \log \left(1-\hat{r}_{u i}\right),
    \end{equation}
\noindent where $\hat{r}_{ui}=\sigma(\mathbf{p}_u^\top\mathbf{q}_i),\mathbf{q}_i\in\mathbf{Q}_u$ denotes the predicted rating of item $i$ by client $u$. Here, $\sigma(x)=1/(1+e^{-x})$ denotes the Sigmoid function to map the rating to $[0,1]$.

\textbf{User embedding skew.} We theoretically analyze the user embedding skew, and the detailed proof can be found in Appendix~\ref{app:user_skew}. According to Lemmas~\ref{thm:user_update_} and~\ref{thm:skew_quant_}, global aggregation introduces two additional terms into the original updating gradient of local user embedding: a scaling term and a shift term. The scaling term can alter the original gradient magnitude, while the shift term can change the original gradient direction. They jointly distort the optimization trajectory of local user embedding, leading to the skew observed in Figure~\ref{pic:pre_exp}. Furthermore, Lemma~\ref{thm:skew_bound_} quantifies the accumulated skew over $T$ global rounds, indicating that the user embedding finally converges to a suboptimal point influenced by global aggregation, with discussion provided in Remark~\ref{rmk:impact_skew}.
\begin{lemma}[User embedding update]
\label{thm:user_update_}
Without global aggregation, the original updating gradient of the local user embedding for client $u$ at round $t$ can be expressed as
\begin{equation}
    \nabla_u^{(t)}=\sum_{i=1}^m\left((\sigma(\mathbf{p}_u^{(t)\top} \mathbf{q}_i^{(u,t)}) - r_{ui}) \cdot \mathbf{q}_i^{(u,t)}\right)=\sum_{i=1}^mL_1^{(t)}\cdot \mathbf{q}_i^{(u,t)},
\end{equation}
\noindent where \scriptsize$\mathbf{q}_i^{(u, t)}$\normalsize~is the local $i$-th item embedding and \scriptsize$L_1^{(t)}$\normalsize$ = \sigma($\scriptsize$\mathbf{p}_u^{(t)\top} \mathbf{q}_i^{(u,t)}$\normalsize$) - r_{ui}$\normalsize~for notational simplicity.
\end{lemma}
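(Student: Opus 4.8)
The plan is to obtain $\nabla_u^{(t)}$ by directly differentiating the local BCE objective in Eq.~\eqref{eq:bce} with respect to $\mathbf{p}_u$ at round $t$. Without global aggregation, the user embedding is driven solely by its own local gradient $\partial\mathcal{L}_u/\partial\mathbf{p}_u$, so computing this gradient suffices. First I would isolate a single summand: writing $z_{ui}=\mathbf{p}_u^{(t)\top}\mathbf{q}_i^{(u,t)}$ and $\hat{r}_{ui}=\sigma(z_{ui})$, each term of Eq.~\eqref{eq:bce} is the composition $\ell_{ui}=-[\,r_{ui}\log\sigma(z_{ui})+(1-r_{ui})\log(1-\sigma(z_{ui}))\,]$, and the chain rule factors its derivative as $\partial\ell_{ui}/\partial\mathbf{p}_u=(\partial\ell_{ui}/\partial z_{ui})\,(\partial z_{ui}/\partial\mathbf{p}_u)$.

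The crucial step is the algebraic cancellation inside $\partial\ell_{ui}/\partial z_{ui}$. Combining $\partial\ell_{ui}/\partial\hat{r}_{ui}=-(r_{ui}/\hat{r}_{ui}-(1-r_{ui})/(1-\hat{r}_{ui}))$ with the sigmoid identity $\sigma'(z_{ui})=\sigma(z_{ui})(1-\sigma(z_{ui}))=\hat{r}_{ui}(1-\hat{r}_{ui})$, the factors $\hat{r}_{ui}$ and $1-\hat{r}_{ui}$ cancel, and expanding $r_{ui}(1-\hat{r}_{ui})-(1-r_{ui})\hat{r}_{ui}=r_{ui}-\hat{r}_{ui}$ leaves the clean form $\partial\ell_{ui}/\partial z_{ui}=\hat{r}_{ui}-r_{ui}=\sigma(\mathbf{p}_u^{(t)\top}\mathbf{q}_i^{(u,t)})-r_{ui}$. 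This is exactly the scalar abbreviated $L_1^{(t)}$ in the statement.

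Finally, since $\partial z_{ui}/\partial\mathbf{p}_u=\mathbf{q}_i^{(u,t)}$, each summand contributes $L_1^{(t)}\mathbf{q}_i^{(u,t)}$, and summing over items yields $\nabla_u^{(t)}=\sum_i L_1^{(t)}\mathbf{q}_i^{(u,t)}$, as claimed. The one point needing care is the summation range: Eq.~\eqref{eq:bce} runs over observed pairs in $\mathcal{D}_u$, whereas the statement sums over all $m$ items. I would reconcile this with the standard implicit-feedback convention, under which unobserved items are treated as negatives with $r_{ui}=0$, so that $\mathcal{D}_u$ is implicitly extended to all of $\mathcal{I}$ and the two sums agree. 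Beyond this bookkeeping the argument is a routine chain-rule computation; the only genuine insight is the sigmoid--BCE cancellation, which renders $\nabla_u^{(t)}$ \emph{linear} in the item embeddings and thereby lets the later analysis cleanly extract the scaling and shift terms of Lemmas~\ref{thm:skew_quant_} and~\ref{thm:skew_bound_}.
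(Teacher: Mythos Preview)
Your proposal is correct and follows essentially the same approach as the paper: both differentiate the BCE loss in Eq.~\eqref{eq:bce} via the chain rule through $\hat r_{ui}=\sigma(\mathbf{p}_u^\top\mathbf{q}_i^{(u)})$, invoke the sigmoid identity $\sigma'=\sigma(1-\sigma)$ to obtain the cancellation $\partial\ell_{ui}/\partial z_{ui}=\sigma(\mathbf{p}_u^\top\mathbf{q}_i^{(u)})-r_{ui}$, and multiply by $\partial z_{ui}/\partial\mathbf{p}_u=\mathbf{q}_i^{(u)}$. Your extra remark reconciling the sum over $\mathcal{D}_u$ with the sum over all $m$ items (via the implicit-feedback convention $r_{ui}=0$ for unobserved items) is a nice clarification that the paper leaves implicit.
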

\begin{lemma}[User embedding skew]
\label{thm:skew_quant_}
For global round $t$, the local user embedding skew of client $u$ introduced by global item embeddings is given by
\begin{equation}
\begin{split}
    \Delta_u^{(t)}\approx&\sum_{i=1}^m\left( \sigma^{\prime}(\mathbf{p}_u^{(t)\top} \mathbf{q}_i^{(u, t)}) \mathbf{p}_u^{(t)\top} \delta_i^{(t)} \cdot \mathbf{q}_i^{(u, t)}+(\sigma(\mathbf{p}_u^{(t)\top} \mathbf{q}_i^{(u,t)}) - r_{ui})\cdot\delta_i^{(t)}\right)\\
    =&\sum_{i=1}^m\left(\underbrace{L_2^{(t)} \mathbf{p}_u^{(t)\top} \delta_i^{(t)}\cdot\mathbf{q}_i^{(u,t)}}_{\text{Scaling Term}} + \underbrace{L_1^{(t)} \cdot\delta_i^{(t)}}_{\text{Shift Term}}\right),
\end{split}
\end{equation}
\noindent where \scriptsize$\delta_i^{(t)} = \mathbf{q}_i^{(g, t)} - \mathbf{q}_i^{(u, t)}$\normalsize~denotes the difference between the global $i$-th item embedding and the local one at round $t$. Additionally, we denote \scriptsize$L_2^{(t)}$\normalsize$ = \sigma^{\prime}($\scriptsize$\mathbf{p}_u^{(t)\top}\mathbf{q}_i^{(u,t)}$\normalsize$)$ for notational simplicity. Besides, $\sigma^{\prime}(x) = \sigma(x)(1 - \sigma(x))$ is the derivative of the Sigmoid function.
\end{lemma}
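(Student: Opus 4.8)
The plan is to read the skew $\Delta_u^{(t)}$ as the difference between the user-embedding gradient evaluated with the globally aggregated item embeddings and the same gradient evaluated with the purely locally trained ones, where the latter is precisely $\nabla_u^{(t)}$ of Lemma~\ref{thm:user_update_}. Using the definition $\mathbf{q}_i^{(g,t)} = \mathbf{q}_i^{(u,t)} + \delta_i^{(t)}$, I would first write the perturbed gradient
\begin{equation}
    \tilde{\nabla}_u^{(t)} = \sum_{i=1}^m \left( \sigma(\mathbf{p}_u^{(t)\top}\mathbf{q}_i^{(g,t)}) - r_{ui} \right)\mathbf{q}_i^{(g,t)},
\end{equation}
and set $\Delta_u^{(t)} = \tilde{\nabla}_u^{(t)} - \nabla_u^{(t)}$, so that the whole lemma reduces to computing this difference to first order in the perturbation $\delta_i^{(t)}$.

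The core step is a linearization. Since the sigmoid argument shifts from $\mathbf{p}_u^{(t)\top}\mathbf{q}_i^{(u,t)}$ to $\mathbf{p}_u^{(t)\top}\mathbf{q}_i^{(u,t)} + \mathbf{p}_u^{(t)\top}\delta_i^{(t)}$, a first-order Taylor expansion of $\sigma$ together with the identity $\sigma' = \sigma(1-\sigma)$ and the shorthands $L_1^{(t)}, L_2^{(t)}$ gives $\sigma(\mathbf{p}_u^{(t)\top}\mathbf{q}_i^{(g,t)}) - r_{ui} \approx L_1^{(t)} + L_2^{(t)}\,\mathbf{p}_u^{(t)\top}\delta_i^{(t)}$. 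Multiplying this scalar by $\mathbf{q}_i^{(g,t)} = \mathbf{q}_i^{(u,t)} + \delta_i^{(t)}$ and expanding produces four terms per item: the unperturbed $L_1^{(t)}\mathbf{q}_i^{(u,t)}$, the shift contribution $L_1^{(t)}\delta_i^{(t)}$, the scaling contribution $L_2^{(t)}\,\mathbf{p}_u^{(t)\top}\delta_i^{(t)}\,\mathbf{q}_i^{(u,t)}$, and a residual $L_2^{(t)}\,\mathbf{p}_u^{(t)\top}\delta_i^{(t)}\,\delta_i^{(t)}$ that is quadratic in $\delta_i^{(t)}$. Summing over $i$, the first term cancels exactly against $\nabla_u^{(t)}$, and discarding the quadratic residual leaves precisely the scaling and shift terms claimed, yielding the interpretation that one contribution rescales the original gradient direction $\mathbf{q}_i^{(u,t)}$ while the other injects a new direction $\delta_i^{(t)}$.

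The hard part --- and the sole origin of the $\approx$ --- will be justifying the two truncations: the linearization remainder of $\sigma$, which is $O\!\left((\mathbf{p}_u^{(t)\top}\delta_i^{(t)})^2\right)$, and the dropped cross term $\delta_i^{(t)}\,\delta_i^{(t)}$. Both are second order in $\delta_i^{(t)}$, so the derivation is valid in the regime where the aggregated and local item embeddings stay close within a round, i.e.\ $\|\delta_i^{(t)}\|$ is small. I would state this smallness assumption explicitly and invoke the fact that $\sigma$ is globally Lipschitz with bounded second derivative, which uniformly controls the neglected remainder and confirms that the retained first-order expression captures the dominant skew.
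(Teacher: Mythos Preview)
Your proposal is correct and follows essentially the same route as the paper: both write the perturbed gradient with $\mathbf{q}_i^{(g,t)}=\mathbf{q}_i^{(u,t)}+\delta_i^{(t)}$, apply a first-order Taylor expansion of $\sigma$ at $\mathbf{p}_u^{(t)\top}\mathbf{q}_i^{(u,t)}$, distribute over $\mathbf{q}_i^{(u,t)}+\delta_i^{(t)}$, and discard the $\mathcal{O}(\|\delta_i^{(t)}\|^2)$ terms. Your version is slightly more explicit in defining the skew as $\tilde{\nabla}_u^{(t)}-\nabla_u^{(t)}$ and in invoking the boundedness of $\sigma''$ to control the remainder, but the argument is the same.
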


\begin{lemma}[Accumulated user embedding skew]
\label{thm:skew_bound_}
The final accumulated skew after the total global round $T$ is given by
\begin{equation}
    \|\Delta_{cumulative}^{(u,T)}\| \leq \sum_{i=1}^{m}\left(\frac{\eta C_1 \|\delta_i^{(0)}\|}{1 - \gamma} + \frac{\eta C_2 |\mathbf{p}_u^{(0)\top} \delta_i^{(0)}| \|\mathbf{q}_i^{(u, 0)}\|}{1 - \gamma}\right),
\end{equation}
where $\gamma\in(0,1)$ is the difference amplification factor and $\eta$ is the local learning rate for embedding. Besides, $C_1$ and $C_2$ are two constants, defined as $C_1=\max\{ \sigma($\scriptsize$\mathbf{p}_u^{(t)\top} \mathbf{q}_i^{(u, t)}$\normalsize$) - r_{ui} \}_{t\in\{0,1,\cdots,T-1\}}$ and $C_2=\max\{\sigma^{\prime}($\scriptsize$\mathbf{p}_u^{(t)\top} \mathbf{q}_i^{(u, t)}$\normalsize$) \}_{t\in\{0,1,\cdots,T-1\}}$, respectively.
\end{lemma}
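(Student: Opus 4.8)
The plan is to realize the accumulated skew as the learning-rate-weighted sum of the single-round skews supplied by Lemma~\ref{thm:skew_quant_}. Since at each round the discrepancy between global and local item embeddings injects the extra gradient term $\Delta_u^{(t)}$ into the user-embedding update with step size $\eta$, the natural definition is $\Delta_{cumulative}^{(u,T)} = \eta\sum_{t=0}^{T-1}\Delta_u^{(t)}$. I would substitute the two-term form of $\Delta_u^{(t)}$ and apply the triangle inequality across both the sum over rounds $t$ and the sum over items $i$, splitting the norm into a shift contribution $\eta\sum_{i,t}|L_1^{(t)}|\,\|\delta_i^{(t)}\|$ and a scaling contribution $\eta\sum_{i,t}|L_2^{(t)}|\,|\mathbf{p}_u^{(t)\top}\delta_i^{(t)}|\,\|\mathbf{q}_i^{(u,t)}\|$.

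Next I would control the scalar coefficients uniformly in time. By their definitions, the magnitudes of $L_1^{(t)}=\sigma(\mathbf{p}_u^{(t)\top}\mathbf{q}_i^{(u,t)})-r_{ui}$ and $L_2^{(t)}=\sigma'(\mathbf{p}_u^{(t)\top}\mathbf{q}_i^{(u,t)})$ are bounded by the constants $C_1$ and $C_2$ over $t\in\{0,\dots,T-1\}$, so these can be pulled outside the summations. This isolates the genuinely time-dependent quantities, leaving $\eta C_1\sum_{i,t}\|\delta_i^{(t)}\|$ for the shift term and $\eta C_2\sum_{i,t}|\mathbf{p}_u^{(t)\top}\delta_i^{(t)}|\,\|\mathbf{q}_i^{(u,t)}\|$ for the scaling term to be summed over the rounds.

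The crux is a geometric-contraction argument on the global--local discrepancy. Invoking the difference amplification factor $\gamma\in(0,1)$ as the per-round contraction rate, so that $\|\delta_i^{(t)}\|\le\gamma^{t}\|\delta_i^{(0)}\|$ and, for the scaling term, $|\mathbf{p}_u^{(t)\top}\delta_i^{(t)}|\,\|\mathbf{q}_i^{(u,t)}\|\le\gamma^{t}|\mathbf{p}_u^{(0)\top}\delta_i^{(0)}|\,\|\mathbf{q}_i^{(u,0)}\|$, each inner sum becomes a geometric series. Summing yields $\sum_{t=0}^{T-1}\gamma^{t}=\tfrac{1-\gamma^{T}}{1-\gamma}\le\tfrac{1}{1-\gamma}$, which produces precisely the two fractions with denominator $1-\gamma$ carrying the round-zero quantities $\|\delta_i^{(0)}\|$ and $|\mathbf{p}_u^{(0)\top}\delta_i^{(0)}|\,\|\mathbf{q}_i^{(u,0)}\|$; the outer sum over $i$ then delivers the claimed bound.

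I expect the main obstacle to be justifying the geometric contraction itself rather than the algebraic bookkeeping. Establishing $\|\delta_i^{(t+1)}\|\le\gamma\|\delta_i^{(t)}\|$ requires reasoning about how global aggregation pulls the local and global item embeddings together each round, and for the scaling term I additionally need the embedding norms $\|\mathbf{q}_i^{(u,t)}\|$ and $\|\mathbf{p}_u^{(t)}\|$ to stay uniformly bounded—via Cauchy--Schwarz on $|\mathbf{p}_u^{(t)\top}\delta_i^{(t)}|$—so that the product's decay is genuinely governed by $\gamma^{t}$. Pinning down these stability and contraction hypotheses precisely, presumably under bounded-embedding and convergence assumptions, is where the real work lies; once $\gamma$ is fixed, the geometric summation is routine.
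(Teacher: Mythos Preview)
Your proposal matches the paper's proof essentially step for step: define the accumulated skew as $\eta\sum_{t=0}^{T-1}\Delta_u^{(t)}$, apply the triangle inequality, bound the scalar coefficients by $C_1,C_2$, invoke geometric contraction $\|\delta_i^{(t)}\|\le\gamma^t\|\delta_i^{(0)}\|$, and sum the geometric series. Your anticipated ``main obstacle'' is not one: the contraction $\|\delta_i^{(t+1)}\|\le\gamma\|\delta_i^{(t)}\|$ with $\gamma\in(0,1)$ is taken as a standing hypothesis (the ``difference amplification factor'' in the statement), and for the scaling term the paper simply adopts a worst-case assumption that $|\mathbf{p}_u^{(t)\top}\delta_i^{(t)}|\,\|\mathbf{q}_i^{(u,t)}\|$ is dominated by its initial value, so no separate boundedness argument is required.
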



\section{Our proposed PFedCLR}
The framework of PFedCLR is shown in Figure~\ref{pic:overview}. For each global round $t$, there are three key steps: 
\begin{itemize}[left=0pt, itemsep=-1pt]
\item\textbf{Step 1}: Each client $u$ downloads and updates the global model from last round, \scriptsize$\mathbf{Q}_g^{(t-1)}$\normalsize, to derive the local model \scriptsize$\mathbf{Q}_u^{(t)}$\normalsize. During this process, the user embedding is frozen at the state of last round, \scriptsize$\mathbf{p}_u^{(t-1)}$\normalsize, to prevent the potential skew introduced by \scriptsize$\mathbf{Q}_g^{(t-1)}$\normalsize. 

\item\textbf{Step 2}: The client freezes the local model \scriptsize$\mathbf{Q}_u^{(t)}$\normalsize, injecting and optimizing the zero-initialized low-rank matrices \scriptsize$\mathbf{A}_u^{(t)}\mathbf{B}_u^{(t)}$\normalsize~to buffer the impact of \scriptsize$\mathbf{Q}_u^{(t)}$\normalsize~, thereby enabling the calibration of local user embedding \scriptsize$\mathbf{p}_u^{(t)}$\normalsize. Furthermore, by merging the local model with the low-rank matrices, \textit{i.e.}, \scriptsize$\mathbf{Q}_u^{(t)}+\mathbf{A}_u^{(t)}\mathbf{B}_u^{(t)}$\normalsize, our method also achieves the personalization of item embeddings.

\item\textbf{Step 3}: The server aggregates the uploaded local models \scriptsize$\{\mathbf{Q}_u^{(t)}\}_{u\in\mathcal{U}}$\normalsize~to obtain the global model \scriptsize$\mathbf{Q}_g^{(t)}$\normalsize, distributing it to the clients for the next round $t+1$.
\end{itemize}

\begin{figure*}[t]
\centering
\includegraphics[width=0.99\textwidth]{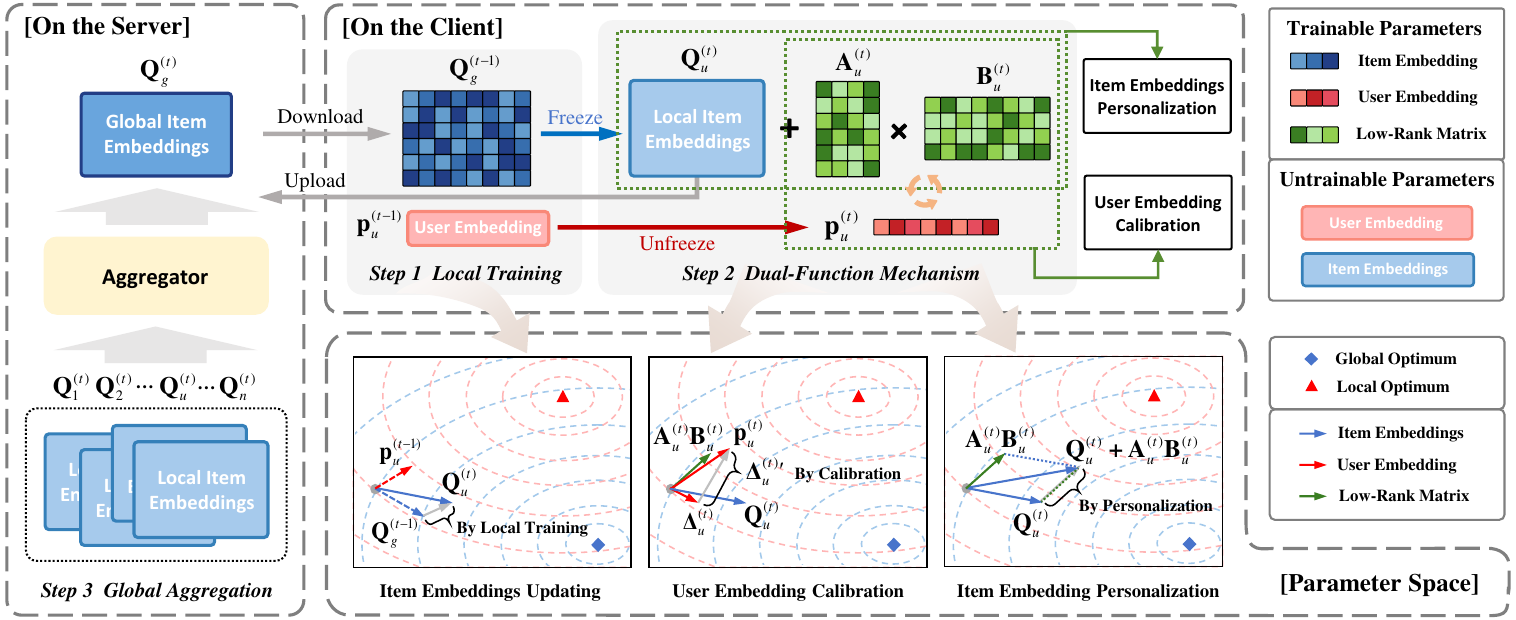} 
\caption{An overview of PFedCLR. During local training, the client only updates the global model \scriptsize$\mathbf{Q}_g^{(t-1)}$\normalsize~to obtain the local model \scriptsize$\mathbf{Q}_u^{(t)}$\normalsize. For dual-function mechanism, the low-rank matrices \scriptsize$\mathbf{A}_u^{(t)}\mathbf{B}_u^{(t)}$\normalsize~can provide \scriptsize$\Delta_u^{(t)\prime}$\normalsize~to mitigate the user embedding skew \scriptsize$\Delta_u^{(t)}$\normalsize~caused by global aggregation, thus calibrating the user embedding \scriptsize$\mathbf{p}_u^{(t)}$\normalsize. Besides, \scriptsize$\mathbf{A}_u^{(t)}\mathbf{B}_u^{(t)}$\normalsize~can also personalize the item embeddings \scriptsize$\mathbf{Q}_u^{(t)}$\normalsize.}
\label{pic:overview}
\end{figure*}

\subsection{Step 1: Local training}
\textbf{Item embeddings updating.} At the beginning of each round $t$, client $u$ first downloads the aggregated global model \scriptsize$\mathbf{Q}_g^{(t-1)}$\normalsize~from the last round for local training. Based on the analysis in Lemma~\ref{thm:skew_quant_}, directly updating the local user embedding \scriptsize$\mathbf{p}_u^{(t-1)}$\normalsize~can result in the skew introduced by \scriptsize$\mathbf{Q}_g^{(t-1)}$\normalsize. Therefore, we freeze \scriptsize$\mathbf{p}_u^{(t-1)}$\normalsize~and only update \scriptsize$\mathbf{Q}_g^{(t-1)}$\normalsize~to obtain the local model \scriptsize$\mathbf{Q}_u^{(t)}$\normalsize~of this round, where the optimization objective is
\begin{equation}\label{eq:local_tra}
     \mathbf{Q}_u^{(t)}\leftarrow \min_{\mathbf{Q}_{u}}\mathcal{L}_{u}(\mathbf{p}_u^{(t-1)}, \mathbf{Q}_g^{(t-1)} ; \mathcal{D}_u).
\end{equation}

\subsection{Step 2: Dual-function mechanism}
\textbf{User embedding calibration.} We freeze the local model \scriptsize$\mathbf{Q}_u^{(t)}$\normalsize~since it is derived from \scriptsize$\mathbf{Q}_g^{(t-1)}$\normalsize, making its potential influence on user embedding controllable. Before introducing the efficiency-oriented design, we first present the formulation of injecting a full buffer matrix \scriptsize$\mathbf{W}_u^{(t)}$\normalsize~$\in\mathbb{R}^{m\times d}$ into \scriptsize$\mathbf{Q}_u^{(t)}$\normalsize~to calibrate the skew \scriptsize$\Delta_u^{(t)}$\normalsize. We provide a theoretical analysis of such calibration in Appendix~\ref{app:user_calibration}.

\begin{lemma}[Calibration of user embedding skew]
\label{thm:skew_calibration_}
At global round $t$, the user embedding calibration for client $u$ achieved by our method can be approximated as
\begin{equation}
    \Delta_u^{(t){\prime}}\approx-\sum_{i=1}^{m}\left( \underbrace{\eta L_1^{(t)} L_2^{(t)} \mathbf{p}_u^{(t)\top} \mathbf{p}_u^{(t)} \cdot \mathbf{q}_i^{(u,t)}}_{\text{Scaling Term}} + \underbrace{\eta L_1^{(t)2} \cdot \mathbf{p}_u^{(t)}}_{\text{Shift Term}} \right).
\end{equation}
\end{lemma}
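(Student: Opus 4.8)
The plan is to mirror the derivation of the skew in Lemma~\ref{thm:skew_quant_}, but with the buffer-induced perturbation $\mathbf{w}_i$ (the $i$-th row of the full buffer matrix $\mathbf{W}_u^{(t)}$) playing the role that $\delta_i^{(t)}$ plays there. First I would write out the Step~2 objective with the merged, frozen item embeddings, so that the predicted rating becomes $\hat{r}_{ui} = \sigma(\mathbf{p}_u^{(t)\top}(\mathbf{q}_i^{(u,t)} + \mathbf{w}_i))$, with $\mathbf{Q}_u^{(t)}$ held fixed while $\mathbf{W}_u^{(t)}$ (and the user embedding) are trainable.

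The key computation is the effect of the buffer's own update. Because $\mathbf{W}_u^{(t)}$ is zero-initialized, at $\mathbf{W}_u^{(t)}=\mathbf{0}$ the merged embedding collapses to $\mathbf{q}_i^{(u,t)}$, so the gradient of $\mathcal{L}_u$ with respect to the row $\mathbf{w}_i$ is $\nabla_{\mathbf{w}_i}\mathcal{L}_u = (\sigma(\mathbf{p}_u^{(t)\top}\mathbf{q}_i^{(u,t)}) - r_{ui})\,\mathbf{p}_u^{(t)} = L_1^{(t)}\mathbf{p}_u^{(t)}$. A single gradient-descent step with learning rate $\eta$ therefore sends $\mathbf{w}_i \mapsto -\eta L_1^{(t)}\mathbf{p}_u^{(t)}$. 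This is the crucial identification: the buffer step produces a perturbation aligned with $\mathbf{p}_u^{(t)}$ and scaled by the residual $L_1^{(t)}$.

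Next I would substitute this $\mathbf{w}_i$ into a first-order expansion of the user-embedding gradient. The calibration $\Delta_u^{(t)\prime}$ is the change in that gradient induced by the buffer, i.e.\ the difference between $\sum_i(\sigma(\mathbf{p}_u^{(t)\top}(\mathbf{q}_i^{(u,t)}+\mathbf{w}_i)) - r_{ui})(\mathbf{q}_i^{(u,t)}+\mathbf{w}_i)$ and the unbuffered gradient $\sum_i L_1^{(t)}\mathbf{q}_i^{(u,t)}$. Taylor-expanding to first order in $\mathbf{w}_i$, exactly as in Lemma~\ref{thm:skew_quant_} with $\mathbf{w}_i$ in place of $\delta_i^{(t)}$, yields a scaling term $L_2^{(t)}(\mathbf{p}_u^{(t)\top}\mathbf{w}_i)\,\mathbf{q}_i^{(u,t)}$ and a shift term $L_1^{(t)}\mathbf{w}_i$. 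Plugging in $\mathbf{w}_i = -\eta L_1^{(t)}\mathbf{p}_u^{(t)}$, the scaling term becomes $-\eta L_1^{(t)}L_2^{(t)}(\mathbf{p}_u^{(t)\top}\mathbf{p}_u^{(t)})\,\mathbf{q}_i^{(u,t)}$ and the shift term becomes $-\eta L_1^{(t)2}\mathbf{p}_u^{(t)}$, which sum to the claimed expression once the overall minus sign is factored out.

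The main obstacle is conceptual rather than computational: one must justify that the relevant quantity is the first-order change in the user-embedding gradient after a single buffer step, evaluated at zero initialization so that the merged embedding reduces to $\mathbf{q}_i^{(u,t)}$ and $\nabla_{\mathbf{w}_i}$ is clean. The remainder is a controlled Taylor expansion in which the $\approx$ signs bookkeep the discarded higher-order terms; these are legitimate to drop because $\mathbf{w}_i = O(\eta)$ while the calibration is itself $O(\eta)$, so the neglected contributions are $O(\eta^2)$. I would close by placing the resulting scaling/shift decomposition alongside that of $\Delta_u^{(t)}$ to confirm the structural correspondence underlying the mitigation depicted in Figure~\ref{pic:overview}.
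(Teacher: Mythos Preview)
Your proposal is correct and follows essentially the same mechanism as the paper: compute the buffer row after one gradient step from zero initialization, obtaining $\mathbf{w}_i \approx -\eta L_1^{(t)}\mathbf{p}_u^{(t)}$, then feed this into the same first-order Taylor expansion used for Lemma~\ref{thm:skew_quant_} to extract the scaling and shift terms.

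The only presentational difference is that the paper treats the frozen embedding as $\mathbf{q}_i^{(g)}=\mathbf{q}_i^{(u)}+\delta_i$ and expands in the combined perturbation $\mathbf{v}_i=\delta_i+\mathbf{w}_i$, so that the user-embedding gradient emerges as a single three-part decomposition (original gradient $+$ skew $+$ calibration) with $\mathbf{v}_i\approx\delta_i-\eta L_1\mathbf{p}_u$ after the buffer step. You instead isolate the buffer's contribution relative to $\mathbf{q}_i^{(u,t)}$ and compare to $\Delta_u^{(t)}$ afterward. The paper's route makes the side-by-side structural match with the skew explicit inside the derivation; yours is slightly cleaner for getting the calibration term alone and is arguably more faithful to the Step~2 setup where $\mathbf{Q}_u^{(t)}$ (not $\mathbf{Q}_g^{(t-1)}$) is frozen. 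Either way the substance is identical.
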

\begin{remark}[Dynamic regularization perspective for calibration]
By comparing the skew \scriptsize$\Delta_u^{(t)}$\normalsize~in Lemma~\ref{thm:skew_quant_} with our calibration \scriptsize$\Delta_u^{(t)\prime}$\normalsize~in Lemma~\ref{thm:skew_calibration_}, we explicitly introduce both a scaling term and a shift term, each tailored to mitigate the corresponding term of user embedding skew. Next, we focus on the shift term used to correct the gradient direction. Taking the $i$-th item for example, the shift term $-\eta$\scriptsize$ L_1^{(t)2} \mathbf{p}_u^{(t)}$\normalsize~is structurally similar to an $l_2$ regularization term applied to \scriptsize$\mathbf{p}_u^{(t)}$\normalsize, which is
\begin{equation}
    \frac{\partial}{\partial \mathbf{p}_u} \left( \frac{1}{2} \|\mathbf{p}_u^{(t)}\|^{2} \right) = \mathbf{p}_u^{(t)}.
\end{equation}
Hence, the shift term can be interpreted as an adaptive regularization term for the user embedding, with a coefficient of $-\eta$\scriptsize$ L_1^{(t)2}$\normalsize~that dynamically adjusts according to the user embedding skew at each round $t$. In this way, it serves as a self-regulating force that suppresses harmful gradient directions induced by global aggregation, offering robust user embedding updates.
\end{remark}

Formally, we update and obtain the calibrated user embedding \scriptsize$\mathbf{p}_u^{(t)}$\normalsize~and the buffer matrix \scriptsize$\mathbf{W}_u^{(t)}$\normalsize~by optimizing the following objective,
\begin{equation}\label{eq:p_W}
    \mathbf{p}_u^{(t)}, \mathbf{W}_{u}^{(t)}\leftarrow \min_{\mathbf{p}_{u},\mathbf{W}_u}\mathcal{L}_{u}(\mathbf{p}_u^{(t-1)}, (\mathbf{Q}_u^{(t)}+\mathbf{W}_u^{(t-1)}) ; \mathcal{D}_u).
\end{equation}

Considering limited local resources, using a full matrix as the buffer would double the local overhead. Inspired by LoRA~\cite{hu2022lora,nguyen2024towards}, we replace the zero-initialized matrix $\mathbf{W}_u$ with a low-rank decomposition $\mathbf{A}_u\mathbf{B}_u$, where $\mathbf{A}_u\in\mathbb{R}^{m\times r}$ is initialized as a zero matrix and $\mathbf{B}_u\in\mathbb{R}^{r\times d}$ is a randomly initialized Gaussian matrix, and $r \ll \min(m, d)$ denotes the rank of the decomposition. In this way, the client requires only $r(m + d) \ll m \times d$ additional parameters, towards improved efficiency and better suitability for practical deployment.
Thus, Equation~\ref{eq:p_W} can be rewritten as
\begin{equation}\label{eq:local_cal}
    \mathbf{p}_u^{(t)}, \mathbf{A}_{u}^{(t)}, \mathbf{B}_{u}^{(t)}\leftarrow \min_{\mathbf{p}_{u},\mathbf{A}_u,\mathbf{B}_u}\mathcal{L}_{u}(\mathbf{p}_u^{(t-1)}, (\mathbf{Q}_u^{(t)}+\mathbf{A}_u^{(t-1)}\mathbf{B}_u^{(t-1)}) ; \mathcal{D}_u).
\end{equation}
\textbf{Item embeddings personalization.} By merging the global information-related \scriptsize$\mathbf{Q}_u^{(t)}$\normalsize~with the local information-related \scriptsize$\mathbf{A}_u^{(t)} \mathbf{B}_u^{(t)}$\normalsize, we can derive the personalized item embeddings \scriptsize$\mathbf{Q}_u^{(t)}+\mathbf{A}_u^{(t)}\mathbf{B}_u^{(t)}$\normalsize~tailored for each client $u$, aiming at customized recommendation.

\subsection{Step 3: Global Aggregation}
On the server, we adopt the prevalent aggregator, FedAvg~\cite{mcmahan2017communication}, to perform global aggregation over the uploaded local models. This process can be formulated as
\begin{equation}\label{eq:global_agg}
\mathbf{Q}_g^{(t)}=\sum_{u=1}^{n}p_u\mathbf{Q}_u^{(t)}, \quad \text{where} \quad p_u=\frac{|\mathcal{D}_u|}{\sum_{v=1}^n |\mathcal{D}_v|}.
\end{equation}

\subsection{Discussion}
\textbf{Algorithm.} The pseudocode is provided in Appendix~\ref{app:algorithm}. Notably, Step 2 and Step 3 can be performed in parallel on the client and server, respectively, enhancing per-round training efficiency.

\textbf{Cost analysis.} With the employment of low-rank decomposition, PFedCLR incurs negligible overhead over the backbone FedMF. A detailed analysis can be found in Appendix~\ref{app:cost}.

\textbf{Privacy analysis.} I) Inherent privacy preservation: we upload \scriptsize$\mathbf{Q}_u^{(t)}$\normalsize~for aggregation after Step 1, while its personalization is performed in Step 2, which prevents the server from accessing user sensitive information. II) Enhanced privacy protection: we incorporate local differential privacy (LDP) into our method to further strengthen privacy. The privacy budget $\varepsilon$ is guaranteed by $\mathcal{S}_u / \lambda$, where $\lambda$ is the strength of Laplace noise and $\mathcal{S}_u$ is the global sensitivity of client $u$. We upper-bound $\mathcal{S}_u$ by $2p_u\eta C$, where $C$ is the clipping threshold. Detailed analysis is provided in Appendix~\ref{app:privacy}. 


\section{Experiment}\label{sec:experiment}

\subsection{Experimental setup}

\textbf{Datasets.} We verify our proposed PFedCLR on five recommendation benchmark datasets with varying scales and sparsity: \textbf{Filmtrust}~\cite{filmtrust_2013}, \textbf{Movielens-100K (ML-100K)}~\cite{harper2015movielens}, \textbf{Movielens-1M (ML-1M)}~\cite{harper2015movielens}, \textbf{HetRec2011}~\cite{cantador2011second} and \textbf{LastFM-2K}~\cite{cantador2011second}. More details are shown in appendix~\ref{app:datasets}.

\textbf{Evaluation protocols.} We follow the popular \textit{leave-one-out} evaluation~\cite{bayer2017generic}. The model performance is reported by \textit{Hit Ratio} (HR@$10$) and \textit{Normalized Discounted Cumulative Gain} (NDCG@$10$)~\cite{he2015trirank}. 

\textbf{Compared baselines.} We compare our method with both centralized and federated baselines. Specifically, MF~\cite{koren2009matrix}, NCF~\cite{he2017neural}, and LightGCN~\cite{he2020lightgcn} are effective centralized recommendation methods. In the federated setting, for a comprehensive comparison, we select traditional FR methods: FedMF~\cite{chai2020secure} and FedNCF~\cite{perifanis2022federated}; global aggregation-based pFR methods: pFedGraph~\cite{ye2023personalized} and GPFedRec~\cite{zhang2024gpfedrec}; and local adaptation-based pFR methods: PFedRec~\cite{zhang2023dual} and FedRAP~\cite{li2023federated}. More details about baselines can be found in appendix~\ref{app:baselines}.

\textbf{Implementation settings.} For a fair comparison, we set the global round $R=100$, batch size $B=256$, and embedding dimension $d=16$ for all methods. For our method, except for the extra parameters, \textit{i.e.}, the rank $r$ and learning rate $\beta$ of low-rank matrices, other hyper-parameters are kept consistent with those of the backbone FedMF. For other methods, we follow the experimental settings with the official code provided in the original paper. More details can be found in appendix~\ref{app:implementation}.

\begin{table}[ht]
\caption{Performance comparison on five datasets. "CenRS" and "FedRS" represent centralized and federated settings, respectively. The best FedRS results are bold and the second ones are underlined. Besides, "Improvement" indicates the performance improvement over the best baseline.}
\centering
\resizebox{\textwidth}{!}{
\begin{tabular}{llcccccccccc}
\toprule
& \multirow{2}{*}{\textbf{Method}}& \multicolumn{2}{c}{\textbf{Filmtrust}} & \multicolumn{2}{c}{\textbf{ML-100K}} & \multicolumn{2}{c}{\textbf{ML-1M}} & \multicolumn{2}{c}{\textbf{HetRec2011}} & \multicolumn{2}{c}{\textbf{LastFM-2K}} \\
\cmidrule(lr){3-4}\cmidrule(lr){5-6}\cmidrule(lr){7-8}\cmidrule(lr){9-10}\cmidrule(lr){11-12}
& & \scriptsize\textbf{HR@10} & \scriptsize\textbf{NDCG@10} & \scriptsize\textbf{HR@10} & \scriptsize\textbf{NDCG@10} & \scriptsize\textbf{HR@10} & \scriptsize\textbf{NDCG@10} & \scriptsize\textbf{HR@10} & \scriptsize\textbf{NDCG@10} & \scriptsize\textbf{HR@10} & \scriptsize\textbf{NDCG@10} \\
\midrule
\multirow{3}*{\textbf{CenRS}} & MF & 0.6886 & 0.5548 & 0.6543 & 0.3788 & 0.6088 & 0.3446 & 0.6275 & 0.3688 & 0.8440 & 0.6191 \\
& NCF & 0.6786 & 0.5437 & 0.6119 & 0.3422 & 0.5858 & 0.3267 & 0.6171 & 0.3663 & 0.7896 & 0.6069 \\
& LightGCN & 0.6956 & 0.5691 & 0.6787 & 0.3994 & 0.6684 & 0.3885 & 0.6611 & 0.3975 & 0.8448 & 0.6853 \\
\midrule
\multirow{6}*{\textbf{FedRS}} & FedMF & 0.6507 & 0.5171 & 0.4846 & 0.2723 & 0.4876 & 0.2734 & 0.5376 & 0.3206 & 0.5839 & 0.3930 \\
& FedNCF & 0.6497 & 0.5331 & 0.4252 & 0.2290 & 0.4180 & 0.2311 & 0.5083 & 0.2982 & 0.4933 & 0.3220 \\
\cmidrule(lr){2-12}
& pFedGraph & \underline{0.6956} & 0.4982 & 0.6204 & 0.4937 & 0.7262 & 0.5991 & 0.6962 & 0.5523 & 0.6485 & 0.6085 \\
& GPFedRec & 0.6866 & \underline{0.5578} & 0.6840 & 0.3982 & 0.6836 & 0.4012 & 0.6488 & 0.4016 & \textbf{0.7896} & 0.6499 \\
\cmidrule(lr){2-12}
& PFedRec & 0.6896 & 0.5479 & 0.6702 & 0.3929 & 0.6611 & 0.3849 & 0.6531 & 0.3948 & 0.7549 & \underline{0.6634} \\
& FedRAP & 0.6826 & 0.4628 & \underline{0.8823} & \underline{0.7980} & \underline{0.8661} & \underline{0.7666} & \underline{0.8486} &  \underline{0.6325} & 0.6257 & 0.5924 \\
\midrule
\textbf{Ours}& PFedCLR & \textbf{0.9102} & \textbf{0.7798} & \textbf{0.9989} & \textbf{0.9225} & \textbf{0.9603} & \textbf{0.8402} & \textbf{0.9522} & \textbf{0.8496} & \underline{0.7778} & \textbf{0.7164} \\
\midrule\midrule
\multicolumn{2}{c}{\textbf{Improvement}} & $\uparrow$ 27.02\% & $\uparrow$ 39.80\% & $\uparrow$ 13.22\% & $\uparrow$ 15.60\% & $\uparrow$ 10.88\% & $\uparrow$ 9.60\% & $\uparrow$ 12.21\% & $\uparrow$ 34.32\% & - & $\uparrow$ 7.99\% \\
\bottomrule
\end{tabular}
}
\label{tab:exp_main}
\end{table}

\subsection{Overall Comparisons}
\textbf{Performance.} The performance comparison of different methods is illustrated in Table~\ref{tab:exp_main}, based on which we provide the following analysis: \\
\textbf{I) PFedCLR significantly outperforms centralized methods across most datasets.} In the centralized setting, only user embeddings are typically treated as personalized parameters. In contrast, our method additionally personalizes item embeddings, enabling more customized recommendations for each client. Notably, LastFM-2K is an extremely sparse dataset with a vast number of items, which makes item embeddings insufficiently trained under the federated setting, yet PFedCLR only underperforms centralized methods at HR@10 while outperforming them at NDCG@10. \\
\textbf{II) PFedCLR achieves superior performance over SOTA federated methods across the majority of scenarios.} Based on traditional FR methods, pFR methods further personalize item embeddings for each client and thus achieve better performance. However, these methods all overlook the user embedding skew issue. In contrast, our method realizes both personalization and calibration with a dual-function mechanism, leading to the best performance. \\
\textbf{III) Our method demonstrates clear improvements on all evaluated datasets, highlighting its effectiveness.} Our method outperforms the strongest SOTA FR baselines on all five datasets, achieving HR@10 above 0.9 on the first four datasets, with the highest gain up to 39.8\%. Moreover, the notable improvement on the NDCG metric demonstrates a more accurate modeling of user preferences.

\begin{wrapfigure}{r}{0.45\textwidth}
\vspace{-0.4cm}
\centering
\includegraphics[width=0.44\textwidth]{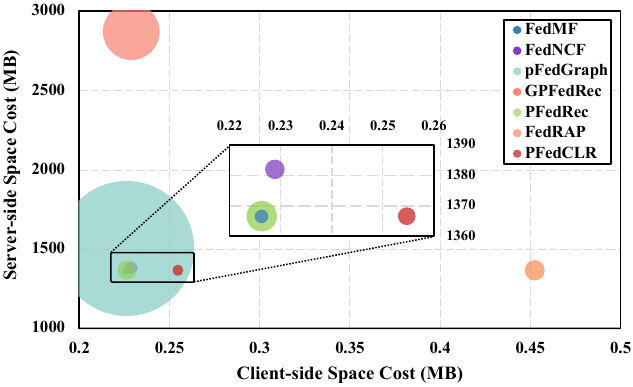} 
\caption{Efficiency comparison on ML-1M. The position of each bubble indicates the space cost, while its size reflects the average training time per round. A smaller bubble denotes more efficient training per round.}
\vspace{-0.6cm}
\label{pic:efficiency}
\end{wrapfigure}

\textbf{Efficiency.}
We further compare the time and space costs of different FR methods. The results on ML-1M are shown in Figure~\ref{pic:efficiency}, and more experimental results can be found in Appendix~\ref{app:efficiency}. Compared to the backbone FedMF, PFedCLR introduces only a slight increase in space cost, adding approximately 0.03MB parameters locally. Moreover, unlike existing methods with alternating client-server updates, PFedCLR uploads the local model to the server immediately after Step 1. As a result, Step 2 and Step 3 can be executed in parallel on the client and the server, respectively, enabling improved round efficiency. Compared with other pFR methods, PFedCLR achieves a better trade-off between performance and efficiency.

\textbf{Convergence.} We compare the convergence of different FR methods, with detailed results presented in Appendix~\ref{app:convergence}. The main conclusions are: I) In the early training rounds, PFedCLR exhibits a notably fast convergence speed, outperforming most methods. II) In the later rounds, our method demonstrates a higher performance and maintains stable convergence. These results suggest both short-term efficiency and long-term effectiveness of PFedCLR compared to existing FR methods.

\subsection{Ablation Study}\label{sec:ablation}
\textbf{Component analysis.} To further evaluate the effectiveness of our method, we conduct a component-wise analysis based on the backbone FedMF, examining the following variants: I) Adaptation with Full matrix (AF): Following the pFR framework shown in Figure~\ref{pic:method_comparison} (b), we employ a full matrix $\mathbf{W}_u$ as the personalization mechanism to fine-tune the global item embeddings. II) Calibration with Full matrix (CF): This variant decouples the updates of user and item embeddings. A full buffer matrix $\mathbf{W}_u$ is employed within the dual-function mechanism to achieve both item embedding personalization and user embedding calibration. III) Calibration with Low-Rank matrices (CLR): This variant replaces the full matrix $\mathbf{W}_u$ in the dual-function mechanism with a low-rank decomposition $\mathbf{A}_u\mathbf{B}_u$. From the results reported in Table~\ref{tab:exp_ablation}, we have the following analysis:\\
\textbf{I)} \textbf{The personalization of global item embeddings can improve model performance.} AF fine-tunes item embeddings locally to provide customized recommendations for each client, which leads to noticeable performance gains. However, similar to existing pFR methods, it ignores the impact of global item embeddings on local user embeddings, resulting in suboptimal performance.\\
\textbf{II)} \textbf{The calibration of local user embedding further enhances performance.} CF uses a zero-initialized matrix to buffer the influence of global item embeddings on local user embeddings, providing more robust updates and mitigating the user embedding skew. This dual-function mechanism jointly personalizes item embeddings and calibrates user embedding, leading to superior results.\\
\textbf{III)} \textbf{The low-rank decomposition provides a more effective way to mitigate user embedding skew.} CLR decomposes the zero-initialized matrix $\mathbf{W}_u$ into a zero-initialized matrix $\mathbf{A}_u$ and a randomly initialized Gaussian matrix $\mathbf{B}_u$. Actually, $\mathbf{B}_u$ offers a low-rank subspace of potential calibration directions, allowing more suitable adjustments to the optimization trajectory of user embedding. 

\begin{table}[t]
\caption{Ablation study results. "AF", "CF", and "CLR" denote Adaptation with Full matrix, Calibration with Full matrix, and Calibration with Low-Rank matrices, respectively.}
\centering
\resizebox{\textwidth}{!}{
\begin{tabular}{lcccccccccc}
\toprule
\multirow{2}{*}{\textbf{Method}}& \multicolumn{2}{c}{\textbf{Filmtrust}} & \multicolumn{2}{c}{\textbf{ML-100K}} & \multicolumn{2}{c}{\textbf{ML-1M}} & \multicolumn{2}{c}{\textbf{HetRec2011}} & \multicolumn{2}{c}{\textbf{LastFM-2K}} \\
\cmidrule(lr){2-3}\cmidrule(lr){4-5}\cmidrule(lr){6-7}\cmidrule(lr){8-9}\cmidrule(lr){10-11}
&\scriptsize\textbf{HR@10} & \scriptsize\textbf{NDCG@10} & \scriptsize\textbf{HR@10} & \scriptsize\textbf{NDCG@10} & \scriptsize\textbf{HR@10} & \scriptsize\textbf{NDCG@10} & \scriptsize\textbf{HR@10} & \scriptsize\textbf{NDCG@10} & \scriptsize\textbf{HR@10} & \scriptsize\textbf{NDCG@10} \\
\midrule
FedMF & 0.6507 & 0.5171 & 0.4846 & 0.2723 & 0.4912 & 0.2751 & 0.5376 & 0.3206 & 0.5839 & 0.3930 \\
FedMF w/ AF & 0.6717 & 0.5429 & 0.6151 & 0.4307 & 0.5897 & 0.4007 & 0.6947 & 0.5024 & 0.6612 & 0.4600 \\
FedMF w/ CF & 0.7754 & 0.6313 & 0.8452 & 0.6955 & 0.8349 & 0.6794 & 0.8239 & 0.6612 & 0.7667 & 0.7037 \\
FedMF w/ CLR & \textbf{0.9102} & \textbf{0.7798} & \textbf{0.9989} & \textbf{0.9225} & \textbf{0.9603} & \textbf{0.8402} & \textbf{0.9522} & \textbf{0.8496} & \textbf{0.7778} & \textbf{0.7164} \\
\bottomrule
\end{tabular}
}
\label{tab:exp_ablation}
\end{table}

\begin{wrapfigure}{r}{0.37\textwidth}
\vspace{-0.4cm}
\centering
\includegraphics[width=0.37\textwidth]{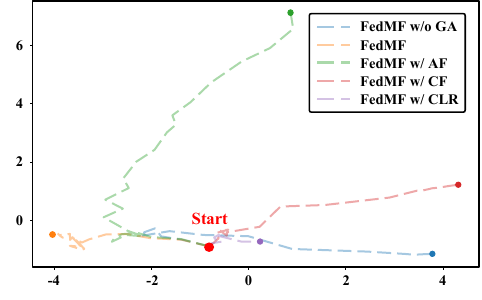} 
\vspace{-0.4cm}
\caption{Training trajectories of local user embedding for Client \#94. "GA" denotes global aggregation.}
\vspace{-0.4cm}
\label{pic:trajectory}
\end{wrapfigure}

\textbf{Trajectory visualization.} The t-SNE training trajectories of local user embedding under different variants are illustrated in Figure~\ref{pic:trajectory}. More cases and analyses can be found in Appendix~\ref{app:trajectory}. We can observe that: I) FedMF exhibits a noticeable difference with and without global aggregation, highlighting the user embedding skew issue. II) AF explores client-specific user embedding updates via local personalization, but these updates are still influenced by global aggregation. III) CF effectively calibrates the update direction of user embedding. However, it greatly weakens the influence of global aggregation, resulting in the loss of global collaborative information, which hinders optimal performance. IV) Due to the inherent low-rank constraints, CLR strikes a better balance between calibration and preserving global information, yielding superior results.

\textbf{Embedding visualization.} We further visualize the user and item embeddings of FedMF and PFedCLR with t-SNE. As one case is shown in Figure~\ref{pic:embedding}, we have the following insights: I) Calibration:
For FedMF, the user embedding is clearly influenced by item embeddings $\mathbf{Q}_u$, confirming that global aggregation can cause the user embedding skew issue. In contrast, PFedCLR effectively mitigates this issue, as illustrated in subfigures (a) and (b). II) Personalization:
Beyond calibrating the user embedding, the low-rank matrices $\mathbf{A}_u\mathbf{B}_u$ of PFedCLR can also personalize $\mathbf{Q}_u$, with the user exhibiting distinct preferences toward the interacted items, as shown in subfigures (c) and (d). More cases and detailed analysis are provided in Appendix~\ref{app:embedding}.

\begin{figure*}[htbp]
\centering
\includegraphics[width=1.0\textwidth]{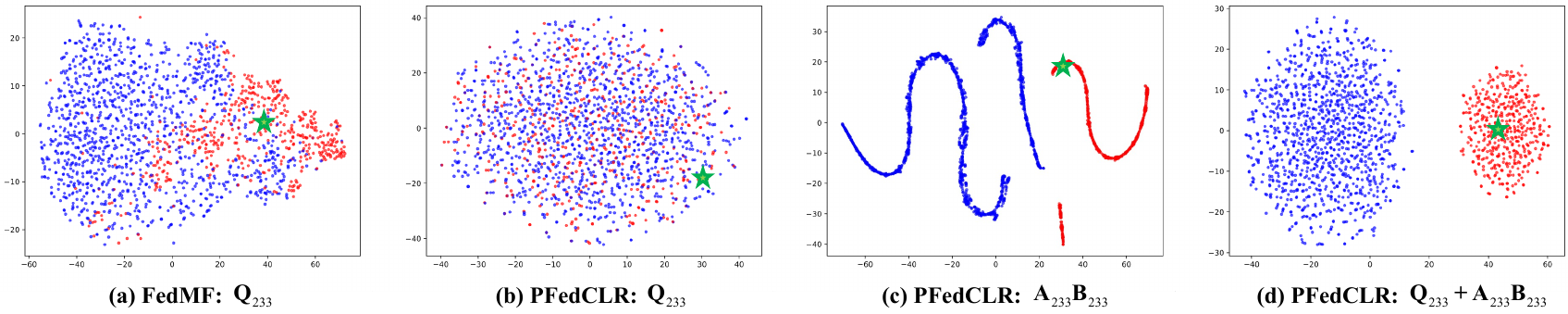} 
\caption{Visualizations of embeddings for Client \#233. The \textcolor{green}{user} is marked by the \textcolor{green}{green} pentagram. Items \textcolor{red}{interacted} with and \textcolor{blue}{not interacted} with by the user are indicated in \textcolor{red}{red} and \textcolor{blue}{blue}, respectively.}
\label{pic:embedding}
\end{figure*}

\subsection{Hyper-parameter Analysis}

\begin{figure*}[htbp]
\centering
\includegraphics[width=0.99\textwidth]{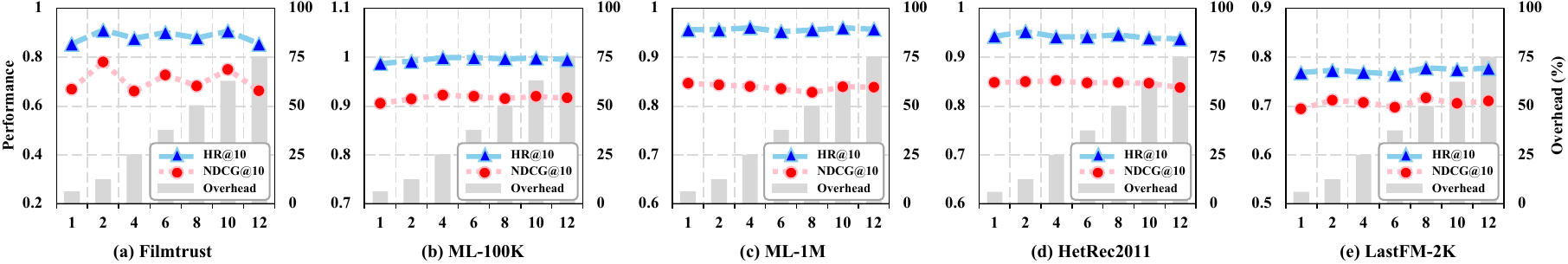} 
\caption{Model performance and extra overhead under different rank. The horizontal axis is the rank of low-rank matrices. The left vertical axis indicates the model performance, \textit{i.e.}, HR@10 and NDCG@10, while the right one indicates the incremental space cost relative to the backbone FedMF.}
\label{pic:hyper_rank}
\end{figure*}

Our method is simple to tune, requiring only two additional hyper-parameters: I) \textbf{Rank $r$ of the low-rank matrices.} As shown in Figure~\ref{pic:hyper_rank}, when the rank $r$ is small, \textit{e.g.}, $r=2$, our method achieves outstanding performance across all datasets with negligible additional overhead, demonstrating its promising practical potential. II) \textbf{Learning rate $\beta$ of the low-rank matrices.} A larger or smaller $\beta$ is detrimental to both personalization and calibration, while the best performance is achieved when $\beta$ equals the default learning rate $\eta$ for embedding, \textit{i.e.}, $\beta = \eta = 0.01$. Detailed results and analysis can be found in Appendix~\ref{app:parameter}.

\subsection{Privacy Protection}
\begin{figure}[htbp]
\centering
\includegraphics[width=0.99\textwidth]{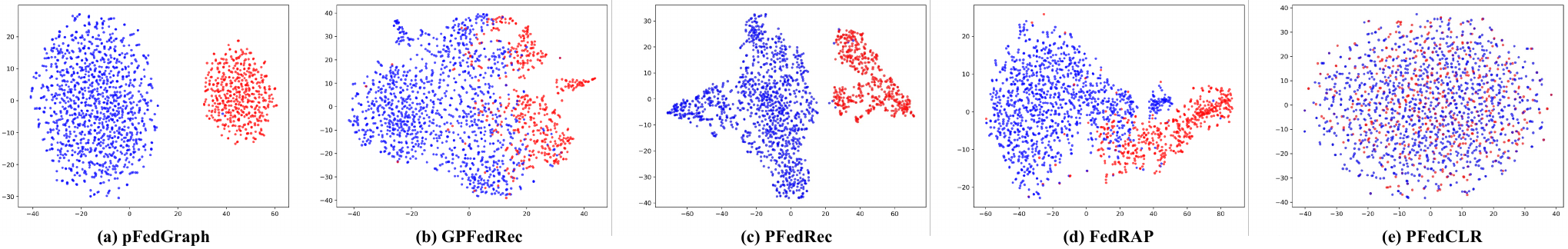}
\caption{Visualizations of item embeddings uploaded to the server by different pFR methods. Items \textcolor{red}{interacted} with and \textcolor{blue}{not interacted} with by the user are indicated in \textcolor{red}{red} and \textcolor{blue}{blue}, respectively.}
\label{pic:privacy}
\end{figure}

\begin{wraptable}{r}{0.42\textwidth}
\centering
\vspace{-0.65cm}
\caption{Results of applying local differential privacy (LDP) into our method.}
\vspace{0.2cm}
\resizebox{0.42\textwidth}{!}{
\begin{tabular}{llcccc}
\toprule
\textbf{Dataset} & \textbf{Metrics} & \textbf{w/o LDP} & \textbf{w/ LDP} & \textbf{Degradation} \\
\midrule
\multirow{2}{*}{\textbf{Filmtrust}} & HR@10 & 0.9102 & 0.9122 & - \\
 & NDCG@10 & 0.7798 & 0.7741 & $\downarrow$ 0.73\%  \\
\midrule
\multirow{2}{*}{\textbf{ML-100K}} & HR@10 & 0.9989 & 0.9979 & $\downarrow$ 0.10\% \\
 & NDCG@10 & 0.9225 & 0.9215 & $\downarrow$ 0.11\% \\
\midrule
\multirow{2}{*}{\textbf{ML-1M}} & HR@10 & 0.9603 & 0.9586 & $\downarrow$ 0.18\% \\
 & NDCG@10 & 0.8402 & 0.8379 & $\downarrow$ 0.27\% \\
\midrule
\multirow{2}{*}{\textbf{HetRec2011}} & HR@10 & 0.9522 & 0.9460  & $\downarrow$ 0.65\% \\
 & NDCG@10 & 0.8496 & 0.8405 & $\downarrow$ 1.07\% \\
\midrule
\multirow{2}{*}{\textbf{LastFm-2K}} & HR@10 & 0.7778 & 0.7526 & $\downarrow$ 3.24\% \\
 & NDCG@10 & 0.7164 & 0.6816 & $\downarrow$ 4.86\% \\
\bottomrule
\end{tabular}
}
\label{tab:exp_ldp}
\end{wraptable}

I) Inherent privacy preservation: The visualization results of the uploaded item embeddings are shown in Figure~\ref{pic:privacy}. We can observe that existing pFR methods perform personalization before uploading the local model, which leads to the leakage of user preference information, as demonstrated by the clear distinction between interacted and non-interacted items. In contrast, our method performs personalization after uploading the model, thereby protecting user-sensitive information.\\
II) Enhanced privacy protection: We further incorporate LDP to strengthen privacy, as shown in Table~\ref{tab:exp_ldp}. It is evident that our method remains robust when applying LDP and still outperforms other SOTA methods. Detailed privacy analysis can be found in Appendix~\ref{app:privacy_exp}.

\section{Conclusion}
In this paper, we first empirically identify the local user embedding skew caused by globally aggregated item embeddings in federated recommendation. Next, we theoretically analyze the rationale behind such skew. To address this issue, we propose PFedCLR, which integrates a dual-function mechanism to simultaneously calibrate local user embedding and personalize global item embeddings, significantly enhancing model performance. Furthermore, considering client-side resource constraints, we apply low-rank decomposition to the buffer matrix within this mechanism, achieving improved efficiency. To preserve privacy, personalization is conducted after uploading the local model, shielding user-sensitive information from the server. Extensive experiments demonstrate that PFedCLR effectively mitigates user embedding skew and achieves a well-balanced trade-off among performance, efficiency, and privacy compared to other SOTA methods.

\bibliographystyle{unsrt}
\bibliography{ref}

\newpage
\appendix
\section*{Appendix}
To support the main content of the paper, we provide the following supplementary materials:
\begin{itemize}[left=0pt, itemsep=-1pt]
\item \textbf{Algorithms}: pseudocode of PFedCLR.
\item \textbf{Theoretical Analysis}: user embedding skew, user embedding calibration, cost analysis, and privacy analysis.
\item \textbf{Experimental Details}: datasets, baselines, and implementation.
\item \textbf{Extensive Experiment Results}: motivation, efficiency, convergence, effectiveness (trajectory and embedding visualization), hyper-parameter analysis, and privacy protection.
\item \textbf{Limitations}: limitations of PFedCLR.
\end{itemize}

\section{Algorithms}\label{app:algorithm}

\begin{algorithm}[htbp]
\caption{PFedCLR}
\label{alg:PFedCLR}
\raggedright
\textbf{Input:} clients $\mathcal{U}$; global rounds \(T\); local epochs \(E\); batch size \(B\); learning rates \(\eta,\beta\); rank \(r\) \\
\textbf{Output:} \(\mathbf{p}_u^{(T)},\mathbf{Q}_u^{(T)},\mathbf{A}_u^{(T)}\mathbf{B}_u^{(T)}\) for each client $u$ \\
\vspace{1ex}
\textbf{Global Procedure:}\\
\begin{algorithmic}[1]
\STATE Initialize global model $\mathbf{Q}_g^{(0)}$; \\
\FOR{each client $u=1,2,\cdots,n$ \textbf{in parallel}}
    \STATE Initialize local models $\mathbf{p}_u^{(0)},\mathbf{Q}_u^{(0)},\mathbf{A}_u^{(0)}\mathbf{B}_u^{(0)}$; \hspace{3em} $\triangleright$ Initialization
\ENDFOR
\FOR{each global round $t=1,2,\cdots, T$}
    \STATE $\mathcal{U}_s \leftarrow$ Randomly select $n_s$ clients from $\mathcal{U}$;
    \FOR{each client $u\in\mathcal{U}_s$ \textbf{in parallel}}
        \STATE Downloads $\mathbf{Q}_g^{(t-1)}$ from the server; \\
        \STATE $\mathbf{Q}_u^{(t)}\leftarrow$ ClientUpdate($\mathbf{Q}_g^{(t-1)}$,\text{u}); \\
    \ENDFOR
    \STATE $\mathbf{Q}_g^{(t)}\leftarrow$ Aggregate $\{\mathbf{Q}_u^{(t)}\}_{u\in\mathcal{U}_s}$ by Equation~(\ref{eq:global_agg}); \hspace{1em} $\triangleright$ Step 3: Global Aggregation
\ENDFOR
\end{algorithmic}
\textbf{ClientUpdate}$(\mathbf{Q}_g^{(t-1)}, $u$)$:\\
\begin{algorithmic}[1]
\FOR{each local epoch $e=1,2,\cdots, E$}
    \FOR{each batch $b=1,2,\cdots,B$ in $\mathcal{D}_u$}
        \STATE $\mathbf{Q}_u^{(t)}\leftarrow$ Update $\mathbf{Q}_g^{(t-1)}$ with Equation~(\ref{eq:local_tra}); \hspace{3em} $\triangleright$ Step 1: Local Training
    \ENDFOR
\ENDFOR
\STATE Uploads $\mathbf{Q}_u^{(t)}$ to the server; \\
\FOR{each local epoch $e=1,2,\cdots, E$}
    \FOR{each batch $b=1,2,\cdots,B$ in $\mathcal{D}_u$}
        \STATE $\mathbf{p}_u^{(t)},\mathbf{A}_u^{(t)}\mathbf{B}_u^{(t)}\leftarrow$ Optimize $\mathbf{p}_u^{(t-1)},\mathbf{A}_u^{(t-1)}\mathbf{B}_u^{(t-1)}$ with Equation~(\ref{eq:local_cal});\\ \hspace{20em} $\triangleright$ Step 2: Dual-Function Mechanism
    \ENDFOR
\ENDFOR
\end{algorithmic}
\end{algorithm}

The overall procedure of the proposed PFedCLR is summarized in Algorithm~\ref{alg:PFedCLR}. Prior to the start of global training, each client initializes its local model based on the global parameter \scriptsize$\mathbf{Q}_g^{(0)}$\normalsize. For each global round $t$, clients first perform Step 1 to update item embeddings \scriptsize$\mathbf{Q}_u^{(t)}$\normalsize~locally, followed by Step 2, where the dual-function mechanism calibrates the user embedding \scriptsize$\mathbf{p}_u^{(t)}$\normalsize~and updates the personalized low-rank matrices \scriptsize$\mathbf{A}_u^{(t)}\mathbf{B}_u^{(t)}$\normalsize. The server then performs Step 3 to aggregate uploaded models \scriptsize$\{\mathbf{Q}_u^{(t)}\}_{u\in\mathcal{U}_s}$\normalsize~and update the global parameter \scriptsize$\mathbf{Q}_g^{(t)}$\normalsize.

Notably, clients can upload their models immediately after Step 1, allowing Step 2 on the client side and Step 3 on the server side to proceed in parallel. This parallelism improves training efficiency compared to conventional methods that rely on alternating client-server updates. Furthermore, since personalization occurs locally in Step 2, only non-personalized local models are uploaded after Step 1, thereby preserving user sensitive information.

\section{Theoretical Analysis}\label{app:theoretical}

\subsection{User embedding skew}\label{app:user_skew}

\begin{lemma}[User embedding update]
Without global aggregation, the original updating gradient of the local user embedding for client $u$ at round $t$ can be expressed as
\begin{equation}
    \nabla_u^{(t)}=\sum_{i=1}^m\left((\sigma(\mathbf{p}_u^{(t)\top} \mathbf{q}_i^{(u,t)}) - r_{ui}) \cdot \mathbf{q}_i^{(u,t)}\right)=\sum_{i=1}^mL_1^{(t)}\cdot \mathbf{q}_i^{(u,t)},
\end{equation}
\noindent where $\mathbf{q}_i^{(u, t)}$ is the local $i$-th item embedding of client $u$ and we denote $L_1^{(t)} = \sigma(\mathbf{p}_u^{(t)\top} \mathbf{q}_i^{(u,t)}) - r_{ui}$ for notational simplicity.
\end{lemma}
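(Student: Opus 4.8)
The plan is to obtain $\nabla_u^{(t)}$ by direct differentiation of the BCE objective in Equation~\eqref{eq:bce} with respect to the user embedding $\mathbf{p}_u$, holding the local item embeddings $\mathbf{q}_i^{(u,t)}$ fixed --- which is exactly the ``without global aggregation'' regime, since then no external perturbation enters the item embeddings during the user update. First I would drop the round superscripts for readability, substitute $\hat{r}_{ui}=\sigma(\mathbf{p}_u^\top\mathbf{q}_i)$ into each summand, and differentiate term by term via the chain rule through three layers: the outer log-loss as a function of $\hat{r}_{ui}$, the Sigmoid $\sigma$ as a function of the scalar argument $\mathbf{p}_u^\top\mathbf{q}_i$, and the linear form $\mathbf{p}_u\mapsto\mathbf{p}_u^\top\mathbf{q}_i$ whose gradient is simply $\mathbf{q}_i$.

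The crux is a cancellation that collapses the intermediate expression to the clean residual. Writing out one summand,
\begin{equation*}
\frac{\partial}{\partial\mathbf{p}_u}\Bigl(-r_{ui}\log\hat{r}_{ui}-(1-r_{ui})\log(1-\hat{r}_{ui})\Bigr)
=\frac{\hat{r}_{ui}-r_{ui}}{\hat{r}_{ui}(1-\hat{r}_{ui})}\,\sigma'(\mathbf{p}_u^\top\mathbf{q}_i)\,\mathbf{q}_i,
\end{equation*}
and then invoking the Sigmoid identity $\sigma'(\mathbf{p}_u^\top\mathbf{q}_i)=\hat{r}_{ui}(1-\hat{r}_{ui})$, the denominator cancels exactly, leaving $(\hat{r}_{ui}-r_{ui})\,\mathbf{q}_i=(\sigma(\mathbf{p}_u^\top\mathbf{q}_i)-r_{ui})\,\mathbf{q}_i$. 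Summing over the items then yields $\nabla_u^{(t)}=\sum_i L_1^{(t)}\mathbf{q}_i^{(u,t)}$ with $L_1^{(t)}=\sigma(\mathbf{p}_u^{(t)\top}\mathbf{q}_i^{(u,t)})-r_{ui}$, as claimed.

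There is no real analytical obstacle here --- the result is a standard logistic-regression gradient --- so the only point requiring care is bookkeeping about the summation range. The lemma sums over all $m$ items, whereas the loss in Equation~\eqref{eq:bce} is written over $\mathcal{D}_u$; I would reconcile this by noting that under implicit feedback every item carries a well-defined label $r_{ui}\in\{0,1\}$ (positives drawn from $\mathcal{I}_u$ and the remainder treated as negatives), so extending the sum to all $m$ items is consistent. The gradient then decomposes additively over items precisely because the BCE loss is a sum of per-item terms sharing the common variable $\mathbf{p}_u$, which is what permits writing $\nabla_u^{(t)}$ as the item-wise sum above.
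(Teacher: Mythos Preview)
Your proof is correct and follows essentially the same route as the paper: both compute $\partial\mathcal{L}_u/\partial\mathbf{p}_u$ for a single item via the chain rule through the BCE loss, the Sigmoid, and the inner product, then use the identity $\sigma'(z)=\sigma(z)(1-\sigma(z))$ to cancel the denominator and obtain the residual form $(\sigma(\mathbf{p}_u^\top\mathbf{q}_i)-r_{ui})\,\mathbf{q}_i$. Your additional remark reconciling the sum over all $m$ items with the loss written over $\mathcal{D}_u$ is a point the paper leaves implicit.
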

\begin{proof}For simplicity, we omit the round notation $t$ in this proof and illustrate with the $i$-th item embedding $\mathbf{q}_{i}\in\mathbb{R}^d$ for example. Firstly, for local training, the updating gradient of user embedding before global aggregation is
\begin{equation}\label{eq:user_update}
\begin{split}
    \frac{\partial \mathcal{L}_u}{\partial \mathbf{p}_u} &= \frac{\partial \mathcal{L}_u}{\partial \hat{r}_{ui}}\frac{\partial \hat{r}_{ui}}{\partial \mathbf{p}_u} \\
    &= \left(-\frac{r_{ui}}{\hat{r}_{ui}} + \frac{1 - r_{ui}}{1 - \hat{r}_{ui}}\right) \cdot \sigma^{\prime}(\mathbf{p}_u^\top \mathbf{q}_i^{(u)}) \mathbf{q}_i^{(u)} \\
    &=\left(-\frac{r_{ui}}{\sigma(\mathbf{p}_u^\top \mathbf{q}_i^{(u)})} + \frac{1 - r_{ui}}{1 - \sigma(\mathbf{p}_u^\top \mathbf{q}_i^{(u)})}\right) \cdot \sigma(\mathbf{p}_u^\top \mathbf{q}_i^{(u)})(1-\sigma(\mathbf{p}_u^\top \mathbf{q}_i^{(u)})) \mathbf{q}_i^{(u)} \\
    &=(\sigma(\mathbf{p}_u^\top \mathbf{q}_i^{(u)}) - r_{ui}) \mathbf{q}_i^{(u)}.
\end{split}
\end{equation}
Hence, the proof is complete.\end{proof}

\begin{lemma}[User embedding skew]
\label{thm:skew_quant}
For global round $t$, the local user embedding skew of client $u$ introduced by global item embeddings is given by
\begin{equation}
    \Delta_u^{(t)}\approx\sum_{i=1}^m\left( \sigma^{\prime}(\mathbf{p}_u^{(t)\top} \mathbf{q}_i^{(u, t)}) \mathbf{p}_u^{(t)\top} \delta_i^{(t)} \cdot \mathbf{q}_i^{(u, t)}+(\sigma(\mathbf{p}_u^{(t)\top} \mathbf{q}_i^{(u,t)}) - r_{ui})\cdot\delta_i^{(t)}\right),
\end{equation}
\noindent where $\mathbf{q}_i^{(u,t)}$ denotes the embedding of item $i$ for client $u$ locally at round  $t$, $\sigma^{\prime}(x) = \sigma(x)(1 - \sigma(x))$ is the derivative of the Sigmoid function. Beside, $\delta_i^{(t)} = \mathbf{q}_i^{(g, t)} - \mathbf{q}_i^{(u, t)}$ denotes the difference between the global $i$-th item embedding and the local one at round $t$. Let $L_1^{(t)} = \sigma(\mathbf{p}_u^{(t)\top} \mathbf{q}_i^{(u,t)}) - r_{ui}$, $L_2^{(t)} = \sigma^{\prime}(\mathbf{p}_u^{(t)\top}\mathbf{q}_i^{(u,t)})$ for notational simplicity, this theorem can be formulated as
\begin{equation}
    \Delta_u^{(t)}\approx\sum_{i=1}^m\left(\underbrace{L_2^{(t)} \mathbf{p}_u^{(t)\top} \delta_i^{(t)}\cdot\mathbf{q}_i^{(u,t)}}_{\text{Scaling Term}} + \underbrace{L_1^{(t)} \cdot\delta_i^{(t)}}_{\text{Shift Term}}\right).
\end{equation}
\end{lemma}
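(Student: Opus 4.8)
The plan is to define the skew $\Delta_u^{(t)}$ as the discrepancy between the user-embedding gradient computed with the globally aggregated item embeddings and the one computed with the purely local item embeddings of Lemma~\ref{thm:user_update_}. Writing $\mathbf{q}_i^{(g,t)} = \mathbf{q}_i^{(u,t)} + \delta_i^{(t)}$, the aggregation-affected gradient is obtained from $\nabla_u^{(t)}$ by replacing each local item embedding $\mathbf{q}_i^{(u,t)}$ with $\mathbf{q}_i^{(g,t)}$, so I would set $\Delta_u^{(t)} := \sum_{i=1}^m \big( (\sigma(\mathbf{p}_u^{(t)\top}\mathbf{q}_i^{(g,t)}) - r_{ui})\,\mathbf{q}_i^{(g,t)} - L_1^{(t)}\,\mathbf{q}_i^{(u,t)} \big)$ and show that this difference reduces to the claimed scaling-plus-shift decomposition after a first-order linearization in $\delta_i^{(t)}$.

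First I would linearize the sigmoid: since $\mathbf{p}_u^{(t)\top}\mathbf{q}_i^{(g,t)} = \mathbf{p}_u^{(t)\top}\mathbf{q}_i^{(u,t)} + \mathbf{p}_u^{(t)\top}\delta_i^{(t)}$, a first-order Taylor expansion of $\sigma$ about $\mathbf{p}_u^{(t)\top}\mathbf{q}_i^{(u,t)}$ gives $\sigma(\mathbf{p}_u^{(t)\top}\mathbf{q}_i^{(g,t)}) \approx \sigma(\mathbf{p}_u^{(t)\top}\mathbf{q}_i^{(u,t)}) + L_2^{(t)}\,\mathbf{p}_u^{(t)\top}\delta_i^{(t)}$, where $L_2^{(t)} = \sigma'$ at the local logit. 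Substituting this together with $\mathbf{q}_i^{(g,t)} = \mathbf{q}_i^{(u,t)} + \delta_i^{(t)}$ into the per-item term and expanding the product yields four contributions: the baseline $L_1^{(t)}\mathbf{q}_i^{(u,t)}$, the scaling contribution $L_2^{(t)}\mathbf{p}_u^{(t)\top}\delta_i^{(t)}\,\mathbf{q}_i^{(u,t)}$, the shift contribution $L_1^{(t)}\delta_i^{(t)}$, and a residual $L_2^{(t)}(\mathbf{p}_u^{(t)\top}\delta_i^{(t)})\,\delta_i^{(t)}$.

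Next I would cancel the baseline $L_1^{(t)}\mathbf{q}_i^{(u,t)}$ against the local gradient $\nabla_u^{(t)}$ and discard the residual, which is second order in $\delta_i^{(t)}$ (it carries a factor $\mathbf{p}_u^{(t)\top}\delta_i^{(t)}$ multiplied by a further factor of $\delta_i^{(t)}$). Summing the two surviving first-order terms over $i$ then reproduces exactly the scaling term $L_2^{(t)}\mathbf{p}_u^{(t)\top}\delta_i^{(t)}\,\mathbf{q}_i^{(u,t)}$ and the shift term $L_1^{(t)}\delta_i^{(t)}$ of the statement.

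The main obstacle is not the algebra but justifying the two approximations that convert the equalities into the stated $\approx$: the first-order Taylor truncation of $\sigma$ and the dropping of the $O(\|\delta_i^{(t)}\|^2)$ residual. I would make these rigorous by noting that $\delta_i^{(t)}$, the per-round gap between the global and local item embeddings, is small relative to the embeddings themselves---precisely the quantity whose growth is controlled by the amplification factor $\gamma$ in the accumulated-skew bound of Lemma~\ref{thm:skew_bound_}---so that the neglected terms are genuinely higher order, while the uniform boundedness of $\sigma'$ and $\sigma''$ keeps the Taylor remainder controlled.
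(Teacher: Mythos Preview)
Your proposal is correct and follows essentially the same route as the paper: define the skew as the difference between the user-embedding gradient computed with $\mathbf{q}_i^{(g,t)}=\mathbf{q}_i^{(u,t)}+\delta_i^{(t)}$ and the local gradient of Lemma~\ref{thm:user_update_}, Taylor-expand $\sigma$ to first order in $\mathbf{p}_u^{(t)\top}\delta_i^{(t)}$, expand the product, cancel the baseline term, and drop the $\mathcal{O}(\|\delta_i^{(t)}\|^2)$ residual. If anything, your final paragraph justifying the first-order truncation is slightly more careful than the paper, which simply discards the higher-order term without further comment.
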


\begin{proof}After the client uploads the local item embeddings, the server performs global aggregation, introducing the difference between global item embedding $\mathbf{q}_i^{(g)}$ and local item embedding $\mathbf{q}_i^{(u)}$, which is defined as
\begin{equation}
    \delta_i = \mathbf{q}_i^{(g)} - \mathbf{q}_i^{(u)}.
\end{equation}
Next, the client downloads the global model $\mathbf{q}_i^{(g)}$ and performs local training, we have the updating gradient of user embedding after global aggregation as
\begin{equation}
\begin{split}
    \frac{\partial \mathcal{L}_u}{\partial \mathbf{p}_u} = &(\sigma(\mathbf{p}_u^\top \mathbf{q}_i^{(g)}) - r_{ui}) \mathbf{q}_i^{(g)} \\
    = &(\sigma(\mathbf{p}_u^\top (\mathbf{q}_i^{(u)}+\delta_i)) - r_{ui}) (\mathbf{q}_i^{(u)}+\delta_i) \\
    \overset{\text{Taylor}}{\underset{\text{Expansion}}{=}} & (\sigma(\mathbf{p}_u^\top \mathbf{q}_i^{(u)}) + \sigma^{\prime}(\mathbf{p}_u^\top \mathbf{q}_i^{(u)}) \mathbf{p}_u^\top \delta_i+o(||\delta_i||^2)-r_{ui})(\mathbf{q}_i^{(u)}+\delta_i) \\
    = & \underbrace{(\sigma(\mathbf{p}_u^\top \mathbf{q}_i^{(u)}) - r_{ui}) \mathbf{q}_i^{(u)}}_{\text{Original Gradient Term}}+\underbrace{\sigma^{\prime}(\mathbf{p}_u^\top\mathbf{q}_i^{(u)})\mathbf{p}_u^\top \delta_i\mathbf{q}_i^{(u)}}_{\text{Scaling Term}}+\underbrace{(\sigma(\mathbf{p}_u^\top \mathbf{q}_i^{(u)}) - r_{ui})\delta_i}_{\text{Shift Term}}+\mathcal{O}(||\delta_i||^2),
\end{split}
\end{equation}
where the first original gradient term represents the updating gradient of user embedding without global aggregation as shown in Equation~(\ref{eq:user_update}). The second scaling term indicates that $\delta_i$ can change the original gradient magnitude. The third shift term shows that $\delta_i$ can change the original gradient direction. The final term is the higher-order infinitesimal term. Due to the difference $\delta_i$ introduced by global aggregation, the scaling and shift terms jointly distort the training trajectory of the local user embedding, resulting in the suboptimal results shown in Figure~\ref{pic:pre_exp}. Therefore, ignoring the higher-order infinitesimal term, the user embedding skew is given by
\begin{equation}
    \Delta_u\approx\sum_{i=1}^{m}\left(\sigma^{\prime}(\mathbf{p}_u^\top\mathbf{q}_i^{(u)})\mathbf{p}_u^\top \delta_i\cdot\mathbf{q}_i^{(u)}+(\sigma(\mathbf{p}_u^\top \mathbf{q}_i^{(u)}) - r_{ui})\cdot\delta_i\right).
\end{equation}
Hence, the proof is complete.\end{proof}

\begin{lemma}[Accumulated user embedding skew]
\label{thm:skew_bound}
Suppose the norm of the difference $\delta_i^{(t)}$ can grow by at most a factor of $\gamma$ in each round, \textit{i.e.}, $\|\delta_i^{(t+1)}\| \leq \gamma \|\delta_i^{(t)}\|$, where $\gamma \in (0,1) $ is the difference amplification factor. The final accumulated skew after global round $T$ is given by
\begin{equation}
    \|\Delta_{cumulative}^{(u,T)}\| \leq \sum_{i=1}^{m}\left(\frac{\eta C_1 \|\delta_i^{(0)}\|}{1 - \gamma} + \frac{\eta C_2 |\mathbf{p}_u^{(0)\top} \delta_i^{(0)}| \|\mathbf{q}_i^{(u, 0)}\|}{1 - \gamma}\right),
\end{equation}
where $C_1$ and $C_2$ are two constants, defined as $C_1=\max\{ \sigma(\mathbf{p}_u^{(t)\top} \mathbf{q}_i^{(u, t)}) - r_{ui} \}_{t\in\{0,1,\cdots,T-1\}}$ and $C_2=\max\{\sigma^{\prime}(\mathbf{p}_u^{(t)\top} \mathbf{q}_i^{(u, t)}) \}_{t\in\{0,1,\cdots,T-1\}}$, respectively. Besides, $\eta$ denotes the local learning rate for embedding.
\end{lemma}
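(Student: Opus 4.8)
The plan is to build the cumulative skew as the superposition of the single-round skews from Lemma~\ref{thm:skew_quant} across all $T$ rounds, and then dominate the resulting sum by a geometric series. Since the user embedding is updated with step size $\eta$, the per-round contribution of the skew to the embedding position is $\eta\Delta_u^{(t)}$, so the accumulated skew is $\Delta_{cumulative}^{(u,T)} = \eta\sum_{t=0}^{T-1}\Delta_u^{(t)}$. Applying the triangle inequality and interchanging the sum over rounds with the sum over items, I would split each term into its scaling and shift contributions, obtaining
\begin{equation}
\|\Delta_{cumulative}^{(u,T)}\| \leq \eta\sum_{i=1}^{m}\sum_{t=0}^{T-1}\left(|L_2^{(t)}|\,|\mathbf{p}_u^{(t)\top}\delta_i^{(t)}|\,\|\mathbf{q}_i^{(u,t)}\| + |L_1^{(t)}|\,\|\delta_i^{(t)}\|\right).
\end{equation}

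First I would bound the coefficients $L_1^{(t)}$ and $L_2^{(t)}$ uniformly over the rounds by the constants $C_1$ and $C_2$, which is immediate from their definitions as maxima. The key step is then to exploit the amplification hypothesis $\|\delta_i^{(t+1)}\|\leq\gamma\|\delta_i^{(t)}\|$, which unrolls to $\|\delta_i^{(t)}\|\leq\gamma^t\|\delta_i^{(0)}\|$. Substituting this into the shift term turns $\sum_{t=0}^{T-1}\|\delta_i^{(t)}\|$ into a geometric sum bounded by $\|\delta_i^{(0)}\|\sum_{t=0}^{\infty}\gamma^t = \|\delta_i^{(0)}\|/(1-\gamma)$, which reproduces the first term of the claimed bound after multiplying by $\eta C_1$.

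For the scaling term, I would propagate the same geometric decay through the factor $|\mathbf{p}_u^{(t)\top}\delta_i^{(t)}|\,\|\mathbf{q}_i^{(u,t)}\|$, reducing it to $\gamma^t$ times its round-$0$ value $|\mathbf{p}_u^{(0)\top}\delta_i^{(0)}|\,\|\mathbf{q}_i^{(u,0)}\|$ and again summing the geometric series. The main obstacle lies precisely here: the stated hypothesis only constrains the norm of $\delta_i^{(t)}$, not the inner product $\mathbf{p}_u^{(t)\top}\delta_i^{(t)}$ nor the item norm $\|\mathbf{q}_i^{(u,t)}\|$, both of which drift as the embeddings are updated each round. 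To close the argument I would assume the user and item embeddings have approximately stationary magnitudes over the rounds considered (reasonable once training nears convergence), so that the decay of the scaling factor is governed by $\delta_i$ and inherits the factor $\gamma^t$; alternatively one can bound $|\mathbf{p}_u^{(t)\top}\delta_i^{(t)}|\leq\|\mathbf{p}_u^{(t)}\|\,\|\delta_i^{(t)}\|$ by Cauchy--Schwarz and absorb the embedding norms into the constants. Summing the two geometric contributions over all items $i=1,\dots,m$ then yields the stated bound, completing the argument.
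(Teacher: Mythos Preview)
Your proposal is correct and follows essentially the same route as the paper: accumulate the per-round skews $\eta\Delta_u^{(t)}$, bound $L_1^{(t)},L_2^{(t)}$ by $C_1,C_2$, use $\|\delta_i^{(t)}\|\leq\gamma^t\|\delta_i^{(0)}\|$ to collapse the shift term into a geometric series, and then deal with the scaling term separately. The gap you flag in the scaling term is exactly the one the paper confronts, and the paper resolves it precisely via your first suggestion---it adopts a ``worst-case perspective'' in which $|\mathbf{p}_u^{(t)\top}\delta_i^{(t)}|\,\|\mathbf{q}_i^{(u,t)}\|$ is upper-bounded by its initial value at $t=0$, arguing heuristically that the initial stage governs the peak deviation, and then applies the same geometric factor $(1-\gamma^T)/(1-\gamma)$.
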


\begin{proof}Following Lemma~\ref{thm:skew_quant}, the accumulated local user embedding skew at global round $t+1$ for client $u$ is
\begin{equation}\label{eq:error_gain}
    \Delta_{cumulative}^{(u,t+1)} = \Delta_{cumulative}^{(u,t)} - \eta\sum_{i=1}^{m} \left[ (\sigma(\mathbf{p}_u^{(t)\top} \mathbf{q}_i^{(u, t)}) - r_{ui}) \delta_i^{(t)} + \sigma^{\prime}(\mathbf{p}_u^{(t)\top} \mathbf{q}_i^{(u, t)}) (\mathbf{p}_u^{(t)\top} \delta_i^{(t)}) \mathbf{q}_i^{(u, t)} \right],
\end{equation}
where $\eta$ denotes the local learning rate for embedding. Next, we consider an upper bound on the increased skew at round $t+1$ as follows,
\begin{equation}
    \|\Delta_{cumulative}^{(u,t+1)}\| \leq \|\Delta_{cumulative}^{(u,t)}\| + \eta \sum_{i=1}^{m} \left( C_1 \|\delta_i^{(t)}\| + C_2 |\mathbf{p}_u^{(t)\top} \delta_i^{(t)}| \|\mathbf{q}_i^{(u, t)}\| \right),
\end{equation}
where $C_1$ and $C_2$ are two constants related to the total global round $T$, defined from Equation~(\ref{eq:error_gain}) as $C_1=\max\{ \sigma(\mathbf{p}_u^{(t)\top} \mathbf{q}_i^{(u, t)}) - r_{ui} \}_{t\in\{0,1,\cdots,T-1\}}$ and $C_2=\max\{\sigma^{\prime}(\mathbf{p}_u^{(t)\top} \mathbf{q}_i^{(u, t)}) \}_{t\in\{0,1,\cdots,T-1\}}$, respectively. Assuming the norm of the difference $\delta_i^{(t)}$ increases by at most an amplification factor of $\gamma$ in each round, \textit{i.e.}, $\|\delta_i^{(t+1)}\| \leq \gamma \|\delta_i^{(t)}\|$, then by recursively expanding, we obtain the accumulated skew at the last round $T$ as
\begin{equation}\label{eq:cumulative}
\begin{split}
    \|\Delta_{cumulative}^{(u,T)}\| &\leq \eta \sum_{t=0}^{T-1}\sum_{i=1}^{m} \left( C_1 \|\delta_i^{(t)}\| + C_2 |\mathbf{p}_u^{(0)\top} \delta_i^{(0)}| \|\mathbf{q}_i^{(u,0)}\| \right)\\
    & = \sum_{i=1}^{m}\left(\eta C_1 \|\delta_i^{(0)}\| \frac{1 - \gamma^T}{1 - \gamma} + \eta C_2 |\mathbf{p}_u^{(0)\top} \delta_i^{(0)}| \|\mathbf{q}_i^{(u, 0)}\| \frac{1 - \gamma^T}{1 - \gamma}\right).
\end{split}
\end{equation}
Here, we adopt a worst-case perspective, where the term $|\mathbf{p}_u^{(t)\top} \delta_i^{(t)}| \|\mathbf{q}_i^{(u, t)}\|$is upper bounded by its initial value at $t=0$. This reflects the intuition that the initial stage of training is most susceptible to uncontrolled deviation, and thus governs the peak of accumulated skew. Based on the convergence property of FedAvg~\cite{mcmahan2017communication, li2019convergence}, we have $0<\gamma < 1$, thus the upper bound of the accumulated skew remains finite, that is,
\begin{equation}
    \|\Delta_{cumulative}^{(u,T)}\| \leq \sum_{i=1}^{m}\left(\frac{\eta C_1 \|\delta_i^{(0)}\|}{1 - \gamma} + \frac{\eta C_2 |\mathbf{p}_u^{(0)\top} \delta_i^{(0)}| \|\mathbf{q}_i^{(u, 0)}\|}{1 - \gamma}\right).
\end{equation}
Hence, the proof is complete.\end{proof}

\begin{remark}[Impact of user embedding skew]\label{rmk:impact_skew}
In federated recommendation, the user embedding $\mathbf{p}_u$ is treated as a local variable to preserve personalization. Ideally, it should be jointly optimized with the local item embeddings $\mathbf{q}_i^{(u)}$. However, global aggregation introduces inconsistency to item embeddings, where $\mathbf{q}_i^{(g)} = \mathbf{q}_i^{(u)} + \delta_i$. Consequently, the user embedding $\mathbf{p}_u$ is updated under the influence of aggregated item embeddings, which may differ from the individual preferences of client $u$. Formally, the expected update of the user embedding should be
\begin{equation}
    \mathbf{p}_u^{(t+1)} = \mathbf{p}_u^{(t)} - \eta \nabla_{\mathbf{p}_u} \mathcal{L}(\mathbf{p}_u),
\end{equation}
where $\mathcal{L}(\cdot)$ denotes the local optimization objective, as defined in Equation~(\ref{eq:bce}). Yet due to the introduced skew, the actual update becomes
\begin{equation}
    \mathbf{p}_u^{(t+1)} = \mathbf{p}_u^{(t)} - \eta \nabla_{\mathbf{p}_u} \mathcal{L}(\mathbf{p}_u) + \Delta_{cumulative}^{(u,t)}.
\end{equation}
Ideally, the user embedding should converge to the client-optimal point $\mathbf{p}_u^*$ such that $\nabla_{\mathbf{p}_u} \mathcal{L}(\mathbf{p}_u^*) = 0$.
However, due to the skew accumulated over global rounds, the embedding $\mathbf{p}_u^{(T)}$ generally fails to satisfy this condition. Instead, we observe that $\nabla_{\mathbf{p}_u} \mathcal{L}(\mathbf{p}_u^{(T)}) \approx -\Delta{\text{cumulative}}^{(u,T)} \neq 0$, indicating a suboptimal convergence point under collaborative aggregation. The norm $\|\Delta_{cumulative}^{(u,T)}\|$, which quantifies this deviation, is formally bounded in Lemma~\ref{thm:skew_bound_}. Therefore, it is crucial to mitigate the influence of $\delta_i^{(0)}$ on user embeddings from the very beginning of training, in order to avoid convergence to suboptimal solutions.
\end{remark}

\subsection{User embedding calibration}\label{app:user_calibration}
To mitigate the user embedding skew analyzed in Lemma~\ref{thm:skew_quant}, we first freeze the item embeddings $\mathbf{Q}_{g}$ to keep their influence on the user embedding $\mathbf{p}_u$ controllable. Then, we inject a learnable matrix $\mathbf{W}_{u}\in\mathbb{R}^{m\times d}$ into the frozen item embeddings $\mathbf{Q}_{u}$ as a buffer for local calibration. 
\begin{lemma}[Calibration of user embedding skew]
\label{thm:skew_calibration}
For global round $t$, the user embedding calibration of client $u$ achieved by our method can be approximated as
\begin{equation}
    \Delta_u^{(t){\prime}}\approx-\sum_{i=1}^{m}\left( \underbrace{\eta L_1^{(t)} L_2^{(t)} \mathbf{p}_u^{(t)\top} \mathbf{p}_u^{(t)} \cdot \mathbf{q}_i^{(u,t)}}_{\text{Scaling Term}} + \underbrace{\eta L_1^{(t)2} \cdot \mathbf{p}_u^{(t)}}_{\text{Shift Term}} \right),
\end{equation}
where $\eta$ is the local learning rate for embedding.
\end{lemma}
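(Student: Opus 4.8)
The plan is to mirror the derivation of the skew in Lemma~\ref{thm:skew_quant} exactly, but with the aggregation difference $\delta_i^{(t)}$ replaced by the buffer correction it is designed to cancel. Since the local model $\mathbf{Q}_u^{(t)}$ is frozen throughout Step~2, the only fresh perturbation to the effective item embedding $\mathbf{q}_i^{(u,t)}+\mathbf{w}_i$ (writing $\mathbf{w}_i$ for the $i$-th row of the zero-initialized buffer $\mathbf{W}_u$) comes from $\mathbf{w}_i$ itself. The key observation is that $\mathbf{w}_i$ plays structurally the same role in the user-embedding gradient as $\delta_i^{(t)}$ did in Lemma~\ref{thm:skew_quant}, so the identical first-order expansion applies; the content of this lemma is simply to evaluate what $\mathbf{w}_i$ becomes once it is optimized and to read off the resulting correction.

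First I would compute the buffer update. Differentiating the BCE loss of Equation~(\ref{eq:bce}) with respect to $\mathbf{w}_i$ reuses the chain rule of Equation~(\ref{eq:user_update}), with $\mathbf{p}_u^{(t)}$ now taking the role that $\mathbf{q}_i^{(u)}$ played there, so that
\begin{equation}
\frac{\partial \mathcal{L}_u}{\partial \mathbf{w}_i}\Big|_{\mathbf{w}_i=\mathbf{0}} = \big(\sigma(\mathbf{p}_u^{(t)\top}\mathbf{q}_i^{(u,t)}) - r_{ui}\big)\,\mathbf{p}_u^{(t)} = L_1^{(t)}\,\mathbf{p}_u^{(t)}.
\end{equation}
Because $\mathbf{W}_u$ is zero-initialized, a single gradient-descent step with learning rate $\eta$ sends $\mathbf{w}_i \leftarrow -\eta L_1^{(t)}\mathbf{p}_u^{(t)}$ to leading order.

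Next I would propagate this correction into the user-embedding gradient. Substituting the calibrated item embedding $\mathbf{q}_i^{(u,t)}+\mathbf{w}_i$ into the user gradient and Taylor-expanding the sigmoid to first order in $\mathbf{w}_i$ — exactly as in the proof of Lemma~\ref{thm:skew_quant} with $\mathbf{w}_i$ in place of $\delta_i^{(t)}$ — the terms beyond the original gradient are a scaling term $L_2^{(t)}(\mathbf{p}_u^{(t)\top}\mathbf{w}_i)\mathbf{q}_i^{(u,t)}$ and a shift term $L_1^{(t)}\mathbf{w}_i$. Inserting $\mathbf{w}_i=-\eta L_1^{(t)}\mathbf{p}_u^{(t)}$ turns these into $-\eta L_1^{(t)}L_2^{(t)}(\mathbf{p}_u^{(t)\top}\mathbf{p}_u^{(t)})\mathbf{q}_i^{(u,t)}$ and $-\eta L_1^{(t)2}\mathbf{p}_u^{(t)}$, and summing over $i$ yields precisely the claimed expression for $\Delta_u^{(t)\prime}$.

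The main obstacle I anticipate is bookkeeping rather than any deep estimate: I must apply the single-step linearization of the buffer and the first-order sigmoid expansion consistently, dropping the same $\mathcal{O}(\|\mathbf{w}_i\|^2)$ remainder that was discarded in Lemma~\ref{thm:skew_quant}, and I must keep the sign conventions aligned so that the scaling and shift terms of $\Delta_u^{(t)\prime}$ genuinely oppose the corresponding terms of the skew $\Delta_u^{(t)}$. I would also flag that the clean statement is derived for the full buffer $\mathbf{W}_u$, with the low-rank factorization $\mathbf{A}_u\mathbf{B}_u$ — whose zero-initialized factor $\mathbf{A}_u$ preserves the $\mathbf{w}_i=\mathbf{0}$ starting point — serving as the efficient surrogate that approximates the same correction within a rank-$r$ subspace.
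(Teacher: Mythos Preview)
Your proposal is correct and follows essentially the same approach as the paper: compute the buffer update from zero initialization to get $\mathbf{w}_i\approx-\eta L_1^{(t)}\mathbf{p}_u^{(t)}$, then feed this into the first-order Taylor expansion of the user-embedding gradient (reusing the machinery of Lemma~\ref{thm:skew_quant}) to read off the scaling and shift corrections. The only cosmetic difference is that the paper expands around $\mathbf{q}_i^{(u)}$ while carrying both perturbations $\mathbf{v}_i=\delta_i+\mathbf{w}_i$ simultaneously---so the original gradient, the skew $\Delta_u^{(t)}$, and the calibration $\Delta_u^{(t)\prime}$ all appear in one displayed equation---whereas you isolate the $\mathbf{w}_i$-contribution directly; both routes use the same first-order truncations and land on the same $\mathbf{w}_i\approx-\eta L_1^{(t)}\mathbf{p}_u^{(t)}$.
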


\begin{proof}The calibration gradients are formulated as follows,
\begin{equation} 
\begin{split}
    \frac{\partial \mathcal{L}_{u}}{\partial \mathbf{w}_i} = &( \sigma(\mathbf{p}_u^\top(\mathbf{q}_i^{(g)} + \mathbf{w}_i)) - r_{ui} )\mathbf{p}_u, \\
    \frac{\partial \mathcal{L}_u}{\partial \mathbf{p}_u} = &(\sigma(\mathbf{p}_u^\top (\mathbf{q}_i^{(g)} + \mathbf{w}_i)) - r_{ui}) \mathbf{q}_i^{(g)}.
\end{split}
\end{equation}
where $\mathbf{w}_i$ denotes the $i$-th row vector of the matrix $\mathbf{W}_u$. Since $\mathbf{W}_{u}$ is zero-initialized, similar to the proof of Lemma~\ref{thm:skew_quant}, we have
\begin{equation}
\begin{split}
    \frac{\partial \mathcal{L}_{u}}{\partial \mathbf{w}_i} = & (\sigma(\mathbf{p}_u^\top \mathbf{q}_i^{(u)}) - r_{ui}) \mathbf{p}_u +\sigma^{\prime}(\mathbf{p}_u^\top\mathbf{q}_i^{(u)})\mathbf{p}_u^\top (\delta_i+\mathbf{w}_i)\mathbf{p}_u  +\mathcal{O}(||\delta_i+\mathbf{w}_i||^2), \\
    \frac{\partial \mathcal{L}_u}{\partial \mathbf{p}_u} = & (\sigma(\mathbf{p}_u^\top \mathbf{q}_i^{(u)}) - r_{ui}) \mathbf{q}_i^{(u)} +\sigma^{\prime}(\mathbf{p}_u^\top\mathbf{q}_i^{(u)})\mathbf{p}_u^\top (\delta_i+\mathbf{w}_i)\mathbf{q}_i^{(u)} + (\sigma(\mathbf{p}_u^\top \mathbf{q}_i^{(u)}) - r_{ui})(\delta_i+\mathbf{w}_i)\\
    +&\mathcal{O}(||\delta_i+\mathbf{w}_i||^2).
\end{split}
\end{equation}
For simplicity, we ignore the higher-order infinitesimal terms and denote $L_1 = \sigma(\mathbf{p}_u^\top \mathbf{q}_i^{(u)}) - r_{ui}$, $L_2 = \sigma^{\prime}(\mathbf{p}_u^\top\mathbf{q}_i^{(u)})$ and $ \mathbf{v}_i = \delta_i + \mathbf{w}_i$, then we have
\begin{equation}
    \frac{\partial \mathcal{L}_{u}}{\partial \mathbf{w}_i} = L_1 \mathbf{p}_u +L_2\mathbf{p}_u^\top \mathbf{v}_i\mathbf{p}_u
\end{equation}
\begin{equation}\label{eq:p_u_L}
    \frac{\partial \mathcal{L}_u}{\partial \mathbf{p}_u} = L_1 \mathbf{q}_i^{(u)} +L_2\mathbf{p}_u^\top \mathbf{v}_i\mathbf{q}_i^{(u)} + L_1\mathbf{v}_i.
\end{equation}
According to Remark~\ref{rmk:impact_skew}, we perform calibration from the beginning of training to avoid suboptimal convergence. Since $\mathbf{W}$ is zero-initialized, the updated buffer matrix can be written as $\mathbf{w}_i = -\eta ( L_1 \mathbf{p}_u + L_2 \mathbf{p}_u^\top \mathbf{v}_i \mathbf{p}_u )$. Therefore, we have
\begin{equation}\label{eq:v_i}
    \mathbf{v}_i=\delta_i + \mathbf{w}_i=\delta_i-\eta(L_1 \mathbf{p}_u +L_2\mathbf{p}_u^\top \mathbf{v}_i\mathbf{p}_u),
\end{equation}
where $\eta$ denotes the local learning rate. Since $\eta \leq 0.01$ is sufficiently small, and $L_1 \in (0, 1)$, $L_2 \in (0, 0.25]$, we apply a first-order approximation to Equation~(\ref{eq:v_i}), yielding $\mathbf{v}_i \approx \delta_i - \eta L_1 \mathbf{p}_u$. Thus Equation~(\ref{eq:p_u_L}) can be rewritten as
\begin{equation}
\begin{split} 
    \frac{\partial \mathcal{L}_u}{\partial \mathbf{p}_u} \approx & L_1 \mathbf{q}_i^{(u)} + L_2 (\mathbf{p}_u^\top (\delta_i - \eta L_1 \mathbf{p}_u)) \mathbf{q}_i^{(u)} + L_1 (\delta_i - \eta L_1 \mathbf{p}_u)\\
    = &\underbrace{L_1 \cdot \mathbf{q}_i^{(u)}}_{\text{Gradient}} + \underbrace{(L_2 \mathbf{p}_u^\top \delta_i \cdot \mathbf{q}_i^{(u)} + L_1 \cdot \delta_i)}_{\text{User Embedding Skew}} - \underbrace{( \eta L_1 L_2 \mathbf{p}_u^\top \mathbf{p}_u \cdot \mathbf{q}_i^{(u)} + \eta L_1^2 \cdot \mathbf{p}_u )}_{\text{User Embedding Calibration}},
\end{split}
\end{equation}
where the first term denotes the original gradient of user embedding, and the second term is the skew introduced by global aggregation as analyzed in Lemma~\ref{thm:skew_quant}. Our method introduces a third calibration term to mitigate the skew. Hence, the calibration introduced by our method can be approximated as
\begin{equation}
    \Delta_u^{{\prime}}\approx-\sum_{i=1}^{m}\left( \underbrace{\eta L_1 L_2 \mathbf{p}_u^\top \mathbf{p}_u \cdot \mathbf{q}_i^{(u)}}_{\text{Scaling Term}} + \underbrace{\eta L_1^2 \cdot \mathbf{p}_u}_{\text{Shift Term}} \right),
\end{equation}
where the scaling term mitigates the amplification of gradient magnitude, and the shift term reduces the amplification of directional deviation. Together, they compensate for the user embedding skew $\Delta_u$ in Lemma~\ref{thm:skew_quant}, thus theoretically achieving calibration.\\
Hence, the proof is complete.\end{proof}
\begin{remark}[Dynamic regularization perspective for calibration]
Taking the $i$-th item for example, we consider the shift term in Lemma~\ref{thm:skew_calibration} used to correct the gradient direction, \textit{i.e.}, $-\eta L_1^{(t)2} \mathbf{p}_u^{(t)}$. This term is structurally similar to the gradient of $l_2$ regularization term:
\begin{equation}
    \frac{\partial}{\partial \mathbf{p}_u} \left( \frac{1}{2} \|\mathbf{p}_u^{(t)}\|^2 \right) = \mathbf{p}_u^{(t)}.
\end{equation}
Hence, the shift term can be interpreted as an adaptive regularization term applied to the user embedding, with a coefficient of $-\eta L_1^{(t)2}$ that dynamically adjusts according to the user embedding skew at each round $t$. In this way, it serves as a self-regulating force that suppresses harmful gradient directions induced by global aggregation, offering robust user embedding updates.
\end{remark}

\subsection{Cost analysis}\label{app:cost}

\begin{table}[thb]
    \centering
    \setlength{\abovecaptionskip}{2pt}
    \caption{Cost analysis of PFedCLR against other SOTA FR methods. "$\uparrow$" denotes increased costs on FedMF, while "$-$" indicates no significant additional cost.}
    \label{tab:cost}
    \resizebox{1.0\textwidth}{!}{
    \begin{tabular}{lcccccc}
    \toprule
    \multirow{2}{*}{\textbf{Method}}& \multicolumn{2}{c}{\textbf{Client-side Cost}} & \multicolumn{2}{c}{\textbf{Server-side Cost}} & \multicolumn{2}{c}{\textbf{Round Efficiency}} \\
    \cmidrule(lr){2-3}\cmidrule(lr){4-5}\cmidrule(lr){6-7}
    & \small\textbf{Time} & \small\textbf{Space} & \small\textbf{Time} & \small\textbf{Space} & \small\textbf{Communication} & \small\textbf{Computation} \\
    \midrule
    FedMF & $O(k(m+1)d)$ & $O((m+1)d)$ & $O(nmd)$ & $O((n+1)md)$ & $O(2nmd)$ & $O(k^{max}(m+1)d+nmd)$ \\
    FedNCF & $\uparrow O(kLd^2)$ & $\uparrow O(Ld^2)$ & $\uparrow O(nLd^2)$ & $\uparrow O((n+1)Ld^2)$ & $\uparrow O(2nLd^2)$ & $\uparrow O(k^{max}Ld^2+nLd^2)$ \\
    pFedGraph & - & - & $\uparrow O(n^4+n^2md+n^2md)$ & $\uparrow O(n^2+n^2md)$ & - & $\uparrow O(n^4 + n^2md+n^2md)$ \\
    GPFedRec & $\uparrow O(kLd^2)$ & $\uparrow O(Ld^2)$ & $\uparrow O(n^2md+n^2md)$ & $\uparrow O(n^2+n^2md)$ & - & $\uparrow O(n^2md+n^2md)$ \\
    PFedRec & $\uparrow O(kLd^2)$ & $\uparrow O(Ld^2)$ & - & - & - & $\uparrow O(k^{max}Ld^2)$ \\
    FedRAP & $\uparrow O(kmd)$  & $\uparrow O(md)$ & - & - & - & $\uparrow O(k^{max}md)$  \\
    PFedCLR & $\uparrow O(k(m+d)r)$ & $\uparrow O((m+d)r)$ & - & - & - & - OR $\uparrow O(k^{max}mr-nmd)$ \\
    \bottomrule
    \end{tabular}
    }
\end{table}

Since most FR methods, including ours, are embedding-based, we conduct the cost analysis using the classic FedMF as the benchmark. We analyze the time and space complexity from both the client and server sides. Moreover, we evaluate the efficiency of each global round, including the communication cost, \textit{i.e.}, the total size of transmitted model parameters, and the computation cost, \textit{i.e.}, the maximum local training time across all clients plus the global aggregation time. The costs of PFedCLR against other SOTA FR methods are summarized in Table~\ref{tab:cost}, where $n$ and $m$ represent the number of clients and items, respectively. Additionally, $k$ denotes the number of local interactions of each client, and $k^{\text{max}}$ is the maximum among all clients. Besides, $d$ denotes the dimension of embedding, and $L$ denotes the number of model layers for methods with MLP or other networks.

Concretely, FedNCF extends FedMF by adding an $L$-layer MLP on each client and requires the server to aggregate both item embeddings and MLP parameters. Besides, global aggregation-based pFR methods incur additional server-side overhead by deriving client-specific global models. In particular, pFedGraph solves a convex optimization problem for each client, resulting in high computational costs and making it impractical to sample all clients for aggregation in each round. GPFedRec also builds on FedNCF but omits the aggregation of MLPs on the server. In contrast, local adaptation-based methods mainly introduce local model overhead. For example, PFedRec incorporates an $L$-layer score function to model user-specific preferences, while FedRAP adds an additive item embedding table on each client to enhance personalization.

Notably, PFedCLR incurs no additional server-side overhead and only adds the low-rank matrices of size $\mathcal{O}((m+d)r)$ on each client. Given that $r \ll \min(m,d)$ in practice, this overhead is negligible compared to the backbone model size $\mathcal{O}((m+1)d)$. Moreover, PFedCLR does not increase communication cost per round. Different from other FR methods, where client training and server aggregation are executed alternately, PFedCLR enables parallel computation of client updates and server aggregation, \textit{i.e.}, Step 2 and Step 3, further boosting training efficiency.

\subsection{Privacy analysis}\label{app:privacy}
Although our method follows the federated paradigm, ensuring private data remains locally without being uploaded to the server~\cite{sun2024survey, yin2024device}, existing studies have highlighted the potential risk of inferring local data distributions from uploaded models~\cite{chai2020secure, wu2022federated}. To further enhance privacy protection, we integrate the $\varepsilon$-differential privacy technique with our PFedCLR, where $\varepsilon$ represents the privacy budget, measuring the level of privacy protection. A smaller $\varepsilon$ implies stronger privacy. Specifically, we achieve local differential privacy (LDP) by adding the zero-mean Laplace noises to the uploaded local model $\mathbf{Q}_u$~\cite{dwork2006calibrating, choi2018guaranteeing}, as follows,
\begin{equation}
    \mathbf{Q}_u^{\prime} = \mathbf{Q}_u + \text{Laplace}(0, \lambda),
\end{equation}
where $\lambda=\mathcal{S}_u/\varepsilon$ is the scale parameter of the Laplace distribution~\cite{dwork2006calibrating}. $\mathcal{S}_u$ denotes the global sensitivity of Client $u$, with its upper bound provided in Lemma~\ref{thm:ldp}. Thus, a larger $\lambda$ indicates stronger privacy protection.
\begin{lemma}[Bound of global sensitivity $\mathcal{S}_u$]\label{thm:ldp}
Given two global $\mathbf{Q}_g$ and $\mathbf{Q}_g^{\prime}$, learned from two datasets differing only in the data of client $u$, \textit{i.e.}, $\mathcal{D}_u$ and $\mathcal{D}_u^{\prime}$, respectively. We have
\begin{equation}
    \mathcal{S}_u= || \mathbf{Q}_g - \mathbf{Q}_g^{\prime} || = || p_u \eta (\nabla \mathbf{Q}_u^{\mathcal{D}_u} - \nabla \mathbf{Q}_u^{\mathcal{D}_u^{\prime}}) || \leq  p_u \eta (||(\nabla \mathbf{Q}_u^{\mathcal{D}_u}|| + ||\nabla \mathbf{Q}_u^{\mathcal{D}_u^{\prime}}) ||) \leq 2p_u\eta C,
\end{equation}
where $p_u$ denotes the weight assigned to $\mathbf{Q}_u$ for global aggregation and $\eta$ denotes the local learning rate for embedding. Besides, $\nabla \mathbf{Q}_u^{\mathcal{D}}$ denotes the gradient of the learned model by client $u$ on dataset $\mathcal{D}$, which can be constrained by a clipping technique to ensure $\|\nabla \mathbf{Q}_u^{\mathcal{D}}\| \leq C$, where $C$ is a constant~\cite{wei2020federated}.
\end{lemma}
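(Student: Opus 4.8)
The plan is to exploit the linearity of the FedAvg aggregation rule in Equation~(\ref{eq:global_agg}) together with the fact that the two datasets are \emph{adjacent}, differing only in the records held by client $u$. First I would write out both aggregated models as $\mathbf{Q}_g = \sum_{v=1}^n p_v \mathbf{Q}_v$ and $\mathbf{Q}_g^{\prime} = \sum_{v=1}^n p_v \mathbf{Q}_v^{\prime}$, where $\mathbf{Q}_v$ and $\mathbf{Q}_v^{\prime}$ are the local item-embedding models trained in the two runs. Since every client $v \neq u$ downloads the same global model at the start of the round and trains on identical data in both runs, its local update is unchanged, so $\mathbf{Q}_v = \mathbf{Q}_v^{\prime}$ for all $v \neq u$. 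Consequently all but the $u$-th term cancel in the difference, giving $\mathbf{Q}_g - \mathbf{Q}_g^{\prime} = p_u(\mathbf{Q}_u - \mathbf{Q}_u^{\prime})$.

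Next I would substitute the local one-step gradient update. Because client $u$ starts from the same downloaded global model in both runs, the two local models differ only through the gradients computed on $\mathcal{D}_u$ and $\mathcal{D}_u^{\prime}$, so $\mathbf{Q}_u - \mathbf{Q}_u^{\prime} = -\eta(\nabla \mathbf{Q}_u^{\mathcal{D}_u} - \nabla \mathbf{Q}_u^{\mathcal{D}_u^{\prime}})$. Taking norms and applying the triangle inequality yields $\mathcal{S}_u = \|p_u \eta(\nabla \mathbf{Q}_u^{\mathcal{D}_u} - \nabla \mathbf{Q}_u^{\mathcal{D}_u^{\prime}})\| \leq p_u \eta(\|\nabla \mathbf{Q}_u^{\mathcal{D}_u}\| + \|\nabla \mathbf{Q}_u^{\mathcal{D}_u^{\prime}}\|)$. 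Finally, invoking the gradient-clipping assumption $\|\nabla \mathbf{Q}_u^{\mathcal{D}}\| \leq C$ on each of the two terms gives the desired bound $\mathcal{S}_u \leq 2 p_u \eta C$.

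The main obstacle is justifying the cancellation step rigorously: I must argue that the two runs are genuinely synchronized, \textit{i.e.}, that every client other than $u$ — and the starting state of client $u$ itself — is identical across the two adjacent datasets. This is clean for a single local gradient step within one aggregation round, but if the analysis is meant to hold after several local epochs or across rounds, one must verify that the shared initialization propagates and that the clipping threshold $C$ bounds the (possibly accumulated) update rather than a single gradient. I would therefore state the bound for the per-round, clipped update, where the adjacency assumption isolates the contribution of $\mathcal{D}_u$ and the constant $C$ directly controls each gradient norm, making the triangle-inequality and clipping steps immediate.
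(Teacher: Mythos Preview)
Your proposal is correct and follows exactly the same line of reasoning as the paper: the paper's entire ``proof'' is the displayed chain of equalities and inequalities inside the lemma statement itself, which implicitly relies on precisely the cancellation of the $v\neq u$ terms in the FedAvg aggregate, the one-step gradient update from a common initialization, the triangle inequality, and the clipping bound $\|\nabla \mathbf{Q}_u^{\mathcal{D}}\|\le C$. If anything, your treatment is more careful than the paper's, since you explicitly justify the cancellation step and flag the single-round, single-step assumption needed for the first equality to hold.
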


\section{Experimental Details}\label{app:exp_details}
\subsection{Dataset details}\label{app:datasets}

\begin{table}[thb]\centering
    \setlength{\abovecaptionskip}{2pt}
    \caption{Statistics of the experimental datasets.}
    \label{tab:dataset_statistic}
    \resizebox{0.55\textwidth}{!}{
    \large
    \begin{tabular}{*{5}{c}}
        \toprule
       Dataset & \#User/Client & \#Item & \#Interaction & Sparsity \\
        \midrule
        Filmtrust & 1,002 & 2,042 & 33,372 & 98.37\% \\
        ML-100K & 943 & 1,682 & 100,000 & 93.70\% \\
        ML-1M & 6,040 & 3,706 & 1,000,209 & 95.53\% \\
        Hetrec2011 & 2,113 & 10,109 & 855,598 & 95.99\% \\
        LastFm-2K & 1,600 & 12,454 & 185,650 & 99.07\% \\
        \bottomrule
    \end{tabular}
    }
\end{table}

We conduct extensive experiments on five recommendation benchmark datasets with varying scales and sparsity: \textbf{Filmtrust}~\cite{filmtrust_2013}, \textbf{Movielens-100K (ML-100K)}~\cite{harper2015movielens}, \textbf{Movielens-1M (ML-1M)}~\cite{harper2015movielens}, \textbf{HetRec2011}~\cite{cantador2011second} and \textbf{LastFM-2K}~\cite{cantador2011second}.
The first four datasets are for movie recommendation with user-movie ratings, the last dataset is for music recommendation with user-artist listening count. We convert ratings/counts greater than 0 to 1, targeting the recommendation task with implicit feedback data. Following previous works~\cite{zhang2023dual, li2023federated}, we filter out the users with less than 10 interactions from the above datasets and treat each user as an independent client. Detailed statistics of the five datasets are summarized in Table~\ref{tab:dataset_statistic}.

\subsection{Baseline details}\label{app:baselines}
We compare several centralized recommender system algorithms as follows:

\textbf{Matrix Factorization (MF)}~\cite{koren2009matrix}: It is a classic recommendation model that factorizes the rating matrix into two embeddings within the same latent space, capturing the characteristics of users and items, respectively.\\
\textbf{Neural Collaborative Filtering (NCF)}~\cite{he2017neural}: It builds upon the learned user and item embeddings and further utilizes an MLP to model the user-item interactions, introducing a high level of non-linearity.\\
\textbf{LightGCN}~\cite{he2020lightgcn}: It is a collaborative filtering model based on a simplified graph convolutional network, which learns user and item embeddings by linearly propagating them on the user-item interaction graph efficiently.

Additionally, for a comprehensive comparison, we select several representative SOTA methods from each of the three types of FR methods, as follows:

I) Traditional FR methods:\\
\textbf{FedMF}~\cite{chai2020secure}: It is the federated version of MF. It trains user embedding locally and uploads encrypted item gradients to the server for global aggregation. For the purpose of performance evaluation, we adopt its unencrypted version.\\
\textbf{FedNCF}~\cite{perifanis2022federated}: It is the federated version of NCF. It updates user embedding locally and uploads both item embeddings and MLP to the server for global aggregation.

II) Global aggregation-based pFR methods:\\
\textbf{pFedGraph}~\cite{ye2023personalized}: It formulates the fine-grained optimization based on similarity, aggregating a unique global model for each client. Besides, it optimizes local models based on aggregated models on the
client side to facilitate personalization. This method, as a general federated learning approach, has been proven effective in recommendation tasks.\\
\textbf{GPFedRec}~\cite{zhang2024gpfedrec}: It introduces a graph-guided aggregation mechanism that facilitates the learning of user-specific item embeddings globally, thereby promoting user personalization modeling.

III) Local adaptation-based pFR methods:\\
\textbf{PFedRec}~\cite{zhang2023dual}: It introduces a novel dual personalization mechanism to capture user preferences by a score function and obtain fine-grained item embeddings.\\
\textbf{FedRAP}~\cite{li2023federated}: It balances global shared and local personalized knowledge by applying an additive model to item embeddings, enhancing the recommendation performance.

\subsection{Implementation details}\label{app:implementation}
In the experiment, we randomly sample $N=4$ negative instances for each positive sample following previous works~\cite{he2017neural,zhang2023dual}. Considering the fairness of comparison, we set the embedding dimension $d=16$ and batch size $B=256$ for all methods, and other baseline details are followed from the original paper. For the centralized methods, we set the total epoch as 100 to guarantee their convergence. For the federated methods, we set the global round $R=100$ for generality. Notably, for pFedGraph, we perform personalized aggregation for only $10\%$ of the clients participating in each round. This is because recommendation tasks typically involve tens of thousands of clients, and optimizing a personalized model for each client on the server is computationally expensive, which would incur significantly higher costs than other methods.

For our method, we follow the same basic hyper-parameter settings as the backbone FedMF, that is, sampling $s=60\%$ of clients for global aggregation per round, the learning rate $\eta=0.01$ for user and item embeddings, and $E=10$ local training epochs using the Adam optimizer. As for the additional hyper-parameters introduced by PFedCLR, we search the rank $r$ of low-rank matrices in $\{1, 2, 4, 6, 8, 10, 12\}$, and the learning rate $\beta$ in $\{0.1, 0.01, 0.001, 0.0001\}$ via the validation set performance. In this work, we conduct the experiments on a GPU server with NVIDIA RTX A5000 and report the results as the average of 5 repeated experiments.

\section{Extensive Experiment Results}\label{app:more_experiment}
\subsection{Motivation}\label{app:pre_exp}

\begin{figure*}[htbp]
\centering
\includegraphics[width=1.0\textwidth]{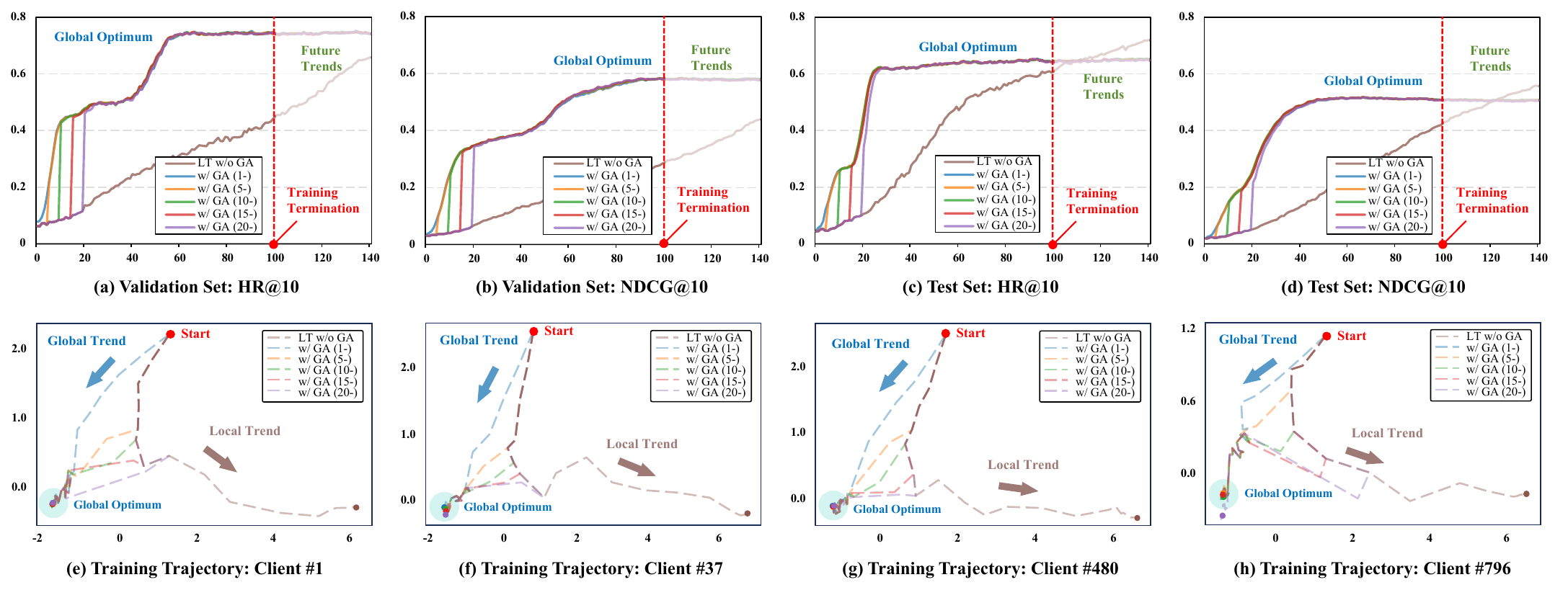} 
\caption{More pre-experimental results for motivation. Subfigures (a) and (b) show performance on the validation set, while (c) and (d) report results on the test set. Subfigures (e) to (h) illustrate the training trajectories of user embeddings for different clients. In the legend, "LT w/o GA" denotes local training without global aggregation, and "w/ GA ($i$-)" indicates that global aggregation is introduced starting from round $i$, rather than local training only.}
\label{pic:more_preexp}
\end{figure*}

\textbf{Experimental setup.} We use FedMF as the backbone and conduct the experiment on Filmtrust dataset. For a fair comparison, the parameter settings are uniform for all experimental groups as follows: local epochs $E=10$, global rounds $R=100$, and the additional 40 rounds are performed to explore the future trend with and without global aggregation. The learning rate of user embeddings and item embeddings $\eta=0.01$, batch size $B=256$, and embedding dimension $d=16$. Besides, we randomly sample $N=4$ negative instances for each positive sample following previous works~\cite{he2017neural, zhang2023dual}. In addition, the model performance is reported by HR@10 and NDCG@10.

\textbf{Experimental results.} More pre-experimental results for motivation are shown in Figure~\ref{pic:more_preexp}. The observation described in Section~\ref{sec:intro} might raise concerns that the superior performance under local training is due to overfitting. To address this, we report results on both the validation and test sets in subfigures (a) to (d). The consistent trends across both sets confirm that our motivation is well-founded rather than a consequence of overfitting. Focusing on test performance under the standard setting, \textit{i.e.}, training is terminated after 100 global rounds, groups with global aggregation converge to a global optimum, while the one with only local training converges more slowly and perform worse. However, from the future trend, the group without global aggregation can achieve better performance. These observations above suggest that global aggregation accelerates convergence but leads to suboptimal results.

Given the specificity of the FR scenarios, \textit{i.e.}, each client retains local user embedding without engaging in global aggregation, we further visualize the training trajectories of user embeddings. As some cases shown in subfigures (e) to (h) of Figure~\ref{pic:more_preexp}, user embeddings exhibit a consistent updating trend once global aggregation begins, which differs significantly from the trend under local training only. This indicates that globally aggregated item embeddings influence the optimization direction of local user embeddings, causing clients to converge toward a collaborative but suboptimal solution, \textit{i.e.}, the user embedding skew issue. To this end, our method not only personalizes item embeddings to enable more accurate recommendations but also calibrates user embeddings to mitigate the skew introduced by global aggregation.

\subsection{Efficiency}\label{app:efficiency}

\begin{table}[htbp]
\caption{Space comparison of different federated methods (unit: MB). "Overhead" indicates the additional cost of PFedCLR over the backbone FedMF.}
\centering
\resizebox{\textwidth}{!}{
\begin{tabular}{lcccccccccc}
\toprule
\multirow{2}{*}{\textbf{Method}}& \multicolumn{2}{c}{\textbf{Filmtrust}} & \multicolumn{2}{c}{\textbf{ML-100K}} & \multicolumn{2}{c}{\textbf{ML-1M}} & \multicolumn{2}{c}{\textbf{HetRec2011}} & \multicolumn{2}{c}{\textbf{LastFM-2K}} \\
\cmidrule(lr){2-3}\cmidrule(lr){4-5}\cmidrule(lr){6-7}\cmidrule(lr){8-9}\cmidrule(lr){10-11}
&\small\textbf{Client} & \small\textbf{Server} & \small\textbf{Client} & \small\textbf{Server} & \small\textbf{Client} & \small\textbf{Server} & \small\textbf{Client} & \small\textbf{Server} & \small\textbf{Client} & \small\textbf{Server} \\
\midrule
FedMF & 0.1247 & 125.0077 & 0.1027 & 96.9121 & 0.2263 & 1366.4518 & 0.6171 & 1304.3473 & 0.7521 & 955.1355 \\
FedNCF & 0.1273 & 127.5827 & 0.1053 & 99.3356 & 0.2288 & 1381.9608 & 0.6196 & 1309.7745 & 0.7547 & 958.3960 \\
pFedGraph & 0.1247 & 137.3846 & 0.1027 & 106.4933 & 0.2263 & 1504.2398 & 0.6171 & 1434.0880 & 0.7521 & 1049.2055 \\
GPFedRec & 0.1273 & 253.7207 & 0.1053 & 197.1138 & 0.2288 & 2871.8436 & 0.6196 & 2625.1093 & 0.7547 & 1915.6620 \\
PFedRec & 0.1247 & 125.0115 & 0.1027 & 96.9157 & 0.2263 & 1366.4748 & 0.6171 & 1304.3554 & 0.7521 & 955.1403 \\
FedRAP & 0.2493 & 125.0077 & 0.2054 & 96.9121 & 0.4525 & 1366.4518 & 1.2341 & 1304.3473 & 1.5042 & 955.1355 \\
PFedCLR & 0.1404 & 125.0077 & 0.1157 & 96.9121 & 0.2547 & 1366.4518 & 0.6943 & 1304.3473 & 0.8463 & 955.1355 \\
\midrule\midrule
Overhead & +0.0157 & - & +0.0130 & - & +0.0284 & - & +0.0772 & - & +0.0942 & - \\
\bottomrule
\end{tabular}
}
\label{tab:exp_space}
\end{table}

\begin{table}[thb]\centering
    \setlength{\abovecaptionskip}{2pt}
    \caption{Time comparison of different federated methods (unit: s). The data represents the total time spent per round, \textit{i.e.}, the maximum time of local updating over all clients together with the time of global aggregation by the server, ignoring the communication time.}
    \label{tab:exp_time}
    \resizebox{0.6\textwidth}{!}{
    \large
    \begin{tabular}{lccccc}
        \toprule
        \textbf{Method} & \textbf{Filmtrust} & \textbf{ML-100K} & \textbf{ML-1M} & \textbf{HetRec2011} & \textbf{LastFM-2K} \\
        \midrule
        FedMF & 0.5069 & 0.4831 & 0.7074 & 0.7425 & 0.5455 \\
        FedNCF & 0.7099 & 0.6682 & 1.4821 & 1.0285 & 0.7360 \\
        pFedGraph & 5.6346 & 5.1800 & 185.9420 & 25.3069 & 10.1281 \\
        GPFedRec & 3.0124 & 2.4704 & 32.5816 & 10.2724 & 6.9336 \\
        PFedRec & 1.2848 & 1.4508 & 3.5830 & 2.6183 & 2.0975 \\
        FedRAP & 0.6217 & 0.6354 & 4.1745 & 2.7195 & 1.7354 \\
        PFedCLR & 0.6337 & 0.6689 & 1.2234 & 1.7539 & 1.0926 \\
        \bottomrule
    \end{tabular}
    }
\end{table}

We compare the space overhead of different federated methods on both client and server sides, as summarized in Table~\ref{tab:exp_space}. Global aggregation-based methods \textit{e.g.}, pFedGraph and GPFedRec, incur additional server-side cost, as they need to maintain a personalized global model for each client besides the uploaded local models. On the other hand, local adaptation-based methods introduce extra overhead on the client side. In particular, FedRAP incurs nearly double the overhead compared to other methods, which is unfavorable for resource-constrained devices. In contrast, our method introduces only a slight overhead on the client compared to the most lightweight baseline FedMF, while achieving a notable performance improvement.

We further compare the per-round training time of different federated methods. As shown in Table~\ref{tab:exp_time}, global aggregation-based pFR methods, such as pFedGraph and GPFedRec, incur longer training time. This is primarily because clients must wait for server-side aggregation to complete before proceeding with local updates. Meanwhile, the server has to derive personalized models for each client, resulting in increased computation time and forming an efficiency bottleneck. Unlike existing methods where global aggregation and local updates are performed alternately, PFedCLR allows them to proceed in parallel. This concurrent execution eliminates idle waiting on the client side, improves training efficiency, and reduces overall training time.

\begin{figure*}[htbp]
\centering
\includegraphics[width=1.0\textwidth]{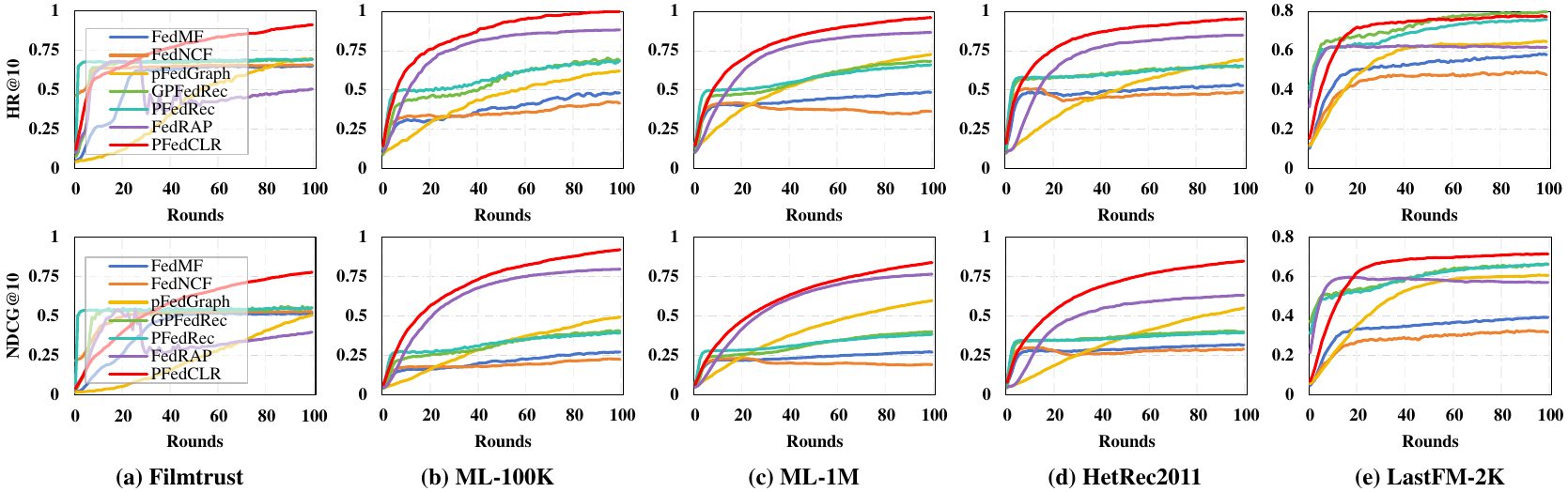} 
\caption{Model convergence comparison. The horizontal axis is the federated rounds, and the vertical axis is the model performance on both HR@10 and NDCG@10.}
\label{pic:convergence}
\end{figure*}

\subsection{Convergence}\label{app:convergence}
We compare the convergence of our method with baselines under two metrics, as illustrated in Figure~\ref{pic:convergence}. We provide the following analysis:\\
I) Early-stage convergence: PFedCLR exhibits a faster convergence speed in the early training rounds and consistently outperforms other methods on most datasets, particularly on ML-100K, ML-1M, and HetRec2011. This advantage stems from its dual-function mechanism, which not only personalizes the global item embeddings but also calibrates the user embeddings, enhancing model performance from both perspectives.\\
II) Late-stage convergence: In the later rounds, PFedCLR maintains stable convergence and achieves superior final performance across various datasets. This is attributed to the continuous calibration of local user embeddings throughout training, which prevents them from being skewed by globally aggregated item embeddings and helps avoid convergence to suboptimal global solutions, instead promoting client-specific optimization.\\
In summary, PFedCLR demonstrates both short-term efficiency and long-term effectiveness, highlighting its strong potential in practical federated recommendation scenarios.

\subsection{Trajectory visualization}\label{app:trajectory}

\begin{figure*}[htbp]
\centering
\includegraphics[width=1.0\textwidth]{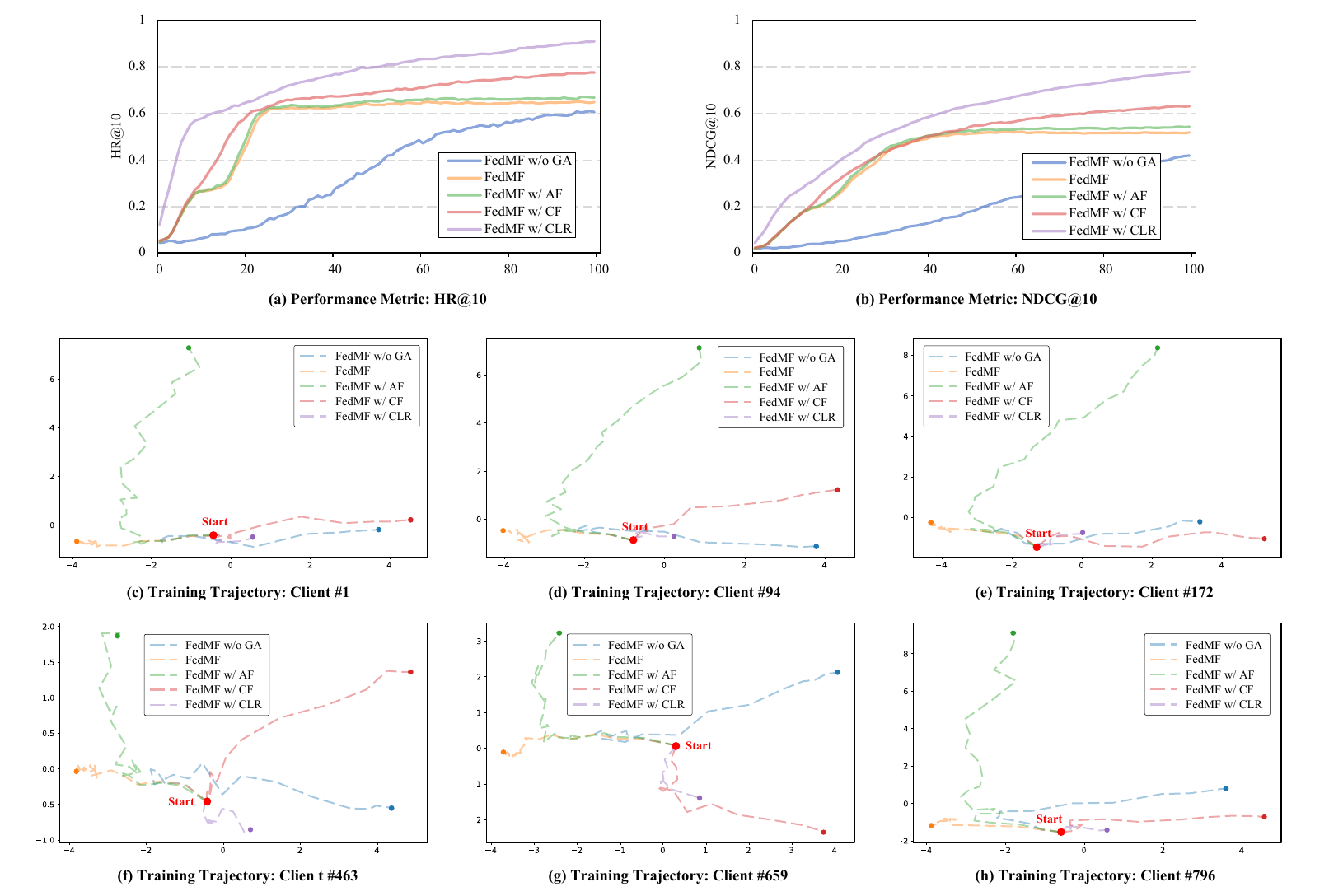} 
\caption{Performance and training trajectory under different variants. "GA", "AF", "CF", and "CLR" denote Global Aggregation, Adaptation with Full matrix, Calibration with Full matrix, and Calibration with Low-Rank matrices, respectively.}
\label{pic:vis_traj}
\end{figure*}

To further demonstrate the calibration effect of our method on the user embedding, we randomly sample several clients on Filmtrust and visualize the training trajectories of their user embeddings under different variants discussed in Subsection~\ref{sec:ablation} using t-SNE. As shown in Figure~\ref{pic:vis_traj}, we derive the following observations and analysis:\\
I) Global aggregation causes user embedding skew. Comparing "FedMF w/o GA" with "FedMF", we observe that global aggregation significantly alters the trajectory of local updates. While it aligns clients toward a shared global optimum, it suppresses personalized optimality, resulting in user embedding skew and suboptimal performance, as discussed in Appendix~\ref{app:pre_exp}.\\
II) Personalized adaptation improves performance. 	"FedMF w/ AF" outperforms "FedMF" by locally adapting the global model for each client, which facilitates more tailored optimization for user embedding and partially alleviates the uniformity imposed by a shared global model. However, the influence of the global model on user embeddings remains significant, as no explicit calibration mechanism is applied, akin to typical pFR methods.\\
III) Calibration via a buffer matrix mitigates this skew. Comparing "FedMF w/ CF" and "FedMF", injecting a learnable calibration matrix into global model can buffer the influence on user embedding. However, employing a full matrix risks overfitting to global information, potentially leading to over-calibration and degraded performance.\\
IV) Low-rank decomposition can brings further improvements in calibration. "FedMF w/ CLR" outperforms "FedMF w/ CF" by replacing the full matrix with the low-rank ones. Compared to a full matrix, the low-rank matrices impose inherent constraints, serving as a regularizer that prevents excessive suppression of global aggregation. While mitigating user embedding skew, they effectively preserve beneficial global information.\\
In summary, beyond personalization, PFedCLR effectively and efficiently calibrates the user embedding skew introduced by global aggregation with a lightweight low-rank mechanism.

\subsection{Embedding visualization}\label{app:embedding}

\begin{figure*}[htbp]
\centering
\includegraphics[width=1.0\textwidth]{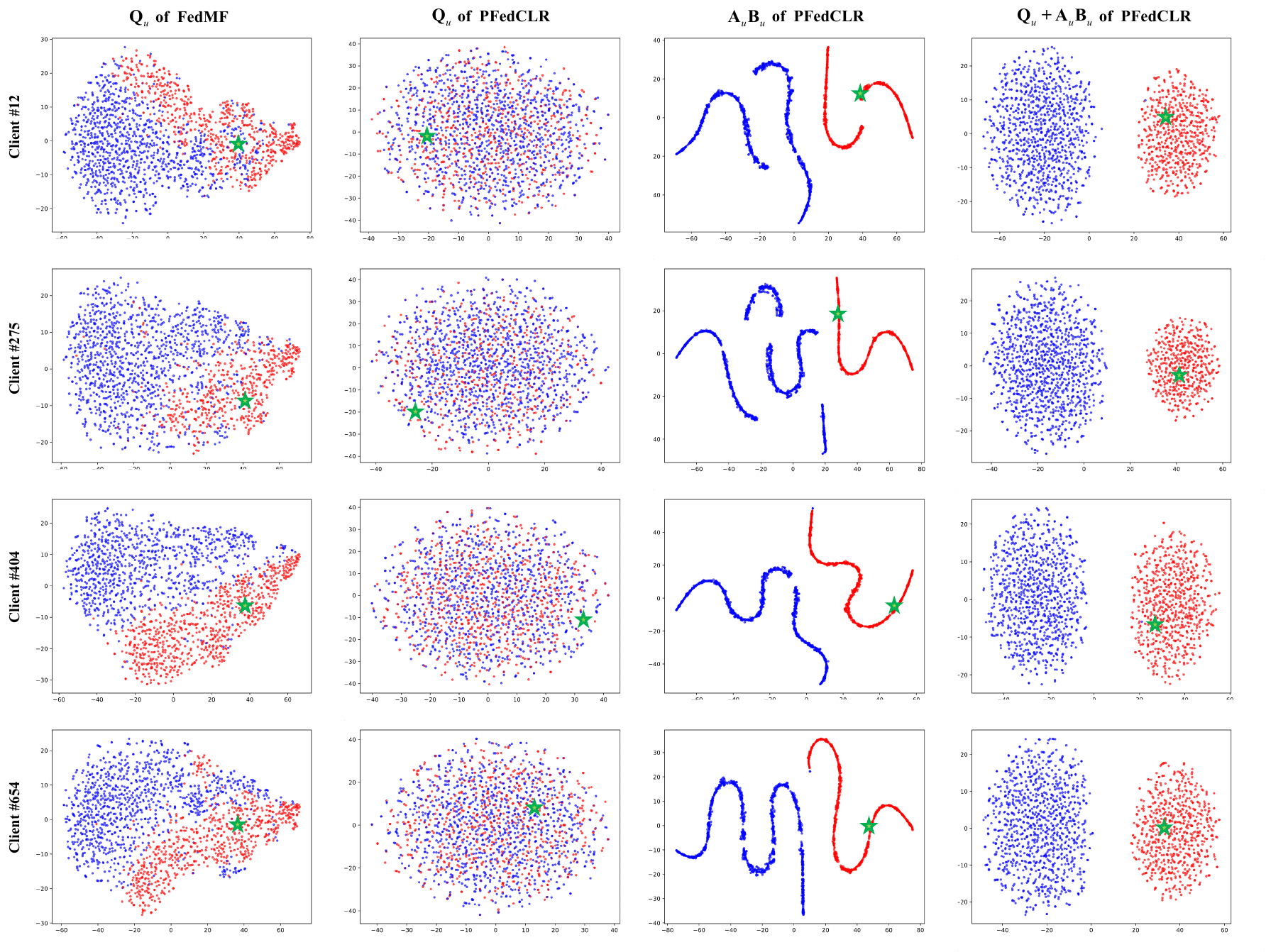} 
\caption{T-SNE visualizations of user and item embeddings learned by FedMF and PFedCLR. Each row represents a different client, while each column represents a different component. The target \textcolor{green}{client} is indicated by the \textcolor{green}{green} pentagram. Items \textcolor{red}{interacted} with and \textcolor{blue}{not interacted} with by the client are indicated in \textcolor{red}{red} and \textcolor{blue}{blue}, respectively.}
\label{pic:vis_ablation}
\end{figure*}

By applying t-SNE for dimensionality reduction, we visualize the user and item embeddings of FedMF and PFedCLR on ML-100K, with several sampled client cases shown in Figure~\ref{pic:vis_ablation}. We summarize the following observations:\\
I) Calibration (1st–2nd columns): In FedMF, user embeddings are significantly influenced by item embeddings $\mathbf{Q}_u$, appearing closer to the interacted items. As global aggregation alters item embeddings, it further affects user embeddings and leads to user embedding skew. In contrast, PFedCLR decouples the updates of user and item embeddings and employ the zero-initialized $\mathbf{A}_u\mathbf{B}_u$ to suppress mutual influence between the them. As a result, user embeddings exhibit no clear relative alignment with item embeddings, indicating effective calibration.\\
II) Personalization (3rd–4th columns): With the dual-function mechanism, user preference information is further captured by $\mathbf{A}_u\mathbf{B}_u$, which personalizes the item embeddings $\mathbf{Q}_u$, enabling more accurate and client-specific recommendations.

\subsection{Hyper-parameter analysis}\label{app:parameter}

\begin{figure*}[htbp]
\centering
\includegraphics[width=0.9\textwidth]{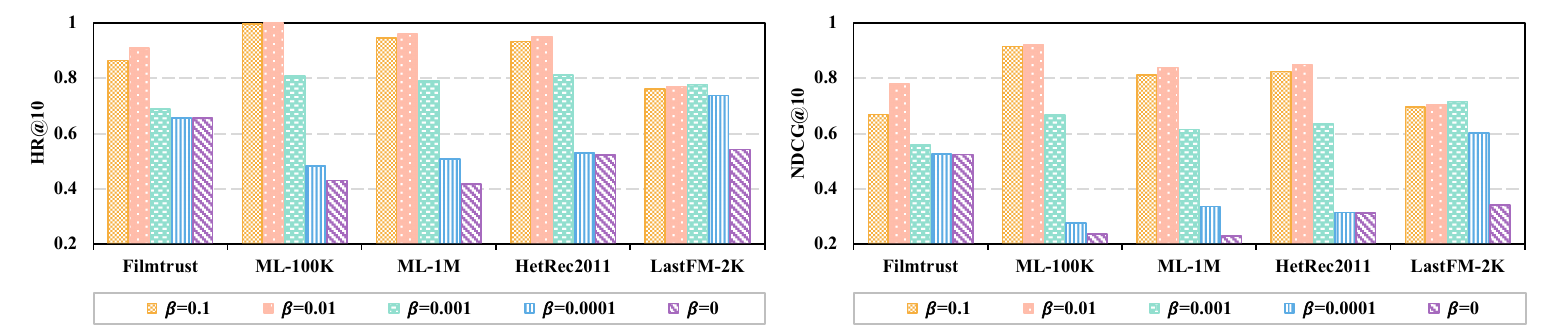} 
\caption{Model performance under different learning rate $\beta$ of low-rank matrices.}
\label{pic:hyper_beta}
\end{figure*}

Built upon the backbone FedMF, our method only introduces the lightweight low-rank matrices on the client side, requiring just two additional hyper-parameters. All other settings remain consistent with FedMF. We analyze the sensitivity of the two hyper-parameters as follows:\\
I) Rank $r$ of the low-rank matrices.
As shown in Figure~\ref{pic:hyper_rank}, we vary $r$ within $\{1, 2, 4, 6, 8, 10, 12\}$. Increasing $r$ does not lead to performance improvement and may even cause slight degradation, likely because it excessively suppresses the influence of the global model, thereby diminishing the beneficial collaborative information. A small value such as $r=2$ achieves consistently strong results across all datasets, introducing only $\sim10\%$ additional client-side overhead and offering a favorable trade-off between performance and efficiency.\\
II) Learning rate $\beta$ for the low-rank matrices.
Figure~\ref{pic:hyper_beta} shows performance with $\beta$ chosen from $\{0.1, 0.01, 0.001, 0.0001, 0\}$. Extremely small or large $\beta$ leads to suboptimal results. When $\beta = 0$, PFedCLR degenerates to FedMF. The best performance is achieved when $\beta$ matches the embedding learning rate $\eta = 0.01$.

\subsection{Privacy protection}\label{app:privacy_exp}
I) Inherent privacy preservation. We visualize the uploaded item embeddings by different methods on ML-100K with t-SNE, as illustrated in Figure~\ref{pic:vis_item}. Traditional FR methods, such as FedMF and FedNCF, do not personalize local models, and thus, the uploaded embeddings reveal little about user-specific interaction patterns. In contrast, pFR methods like pFedGraph, GPFedRec, and PFedRec explicitly personalize the local models, leading to the clear distinctions between interacted and non-interacted items. While this improves performance, it also increases the risk of user preference leakage. Besides, FedRAP maintains two separate models on the client side, \textit{i.e.}, a globally shared model and a user-specific model, to decouple global and personalized information. Although only the shared model is uploaded to the server, we observe that for certain clients, \textit{e.g.}, client \#449, the uploaded embeddings still reveal user-specific patterns. This is because the shared model is updated concurrently with the personalization of the local model, causing mutual influence between them and risking unintended privacy leakage. Unlike the methods above, PFedCLR decouples the update and personalization processes, only uploading the local model obtained from Step 1, before personalization occurs. The personalization is completed in Step 2, and the user-specific information is encapsulated in the low-rank matrices $\mathbf{A}_u\mathbf{B}_u$, which are never exposed to the server. This inherently preserves user privacy while maintaining model effectiveness.

\begin{figure*}[htbp]
\centering
\includegraphics[width=1.0\textwidth]{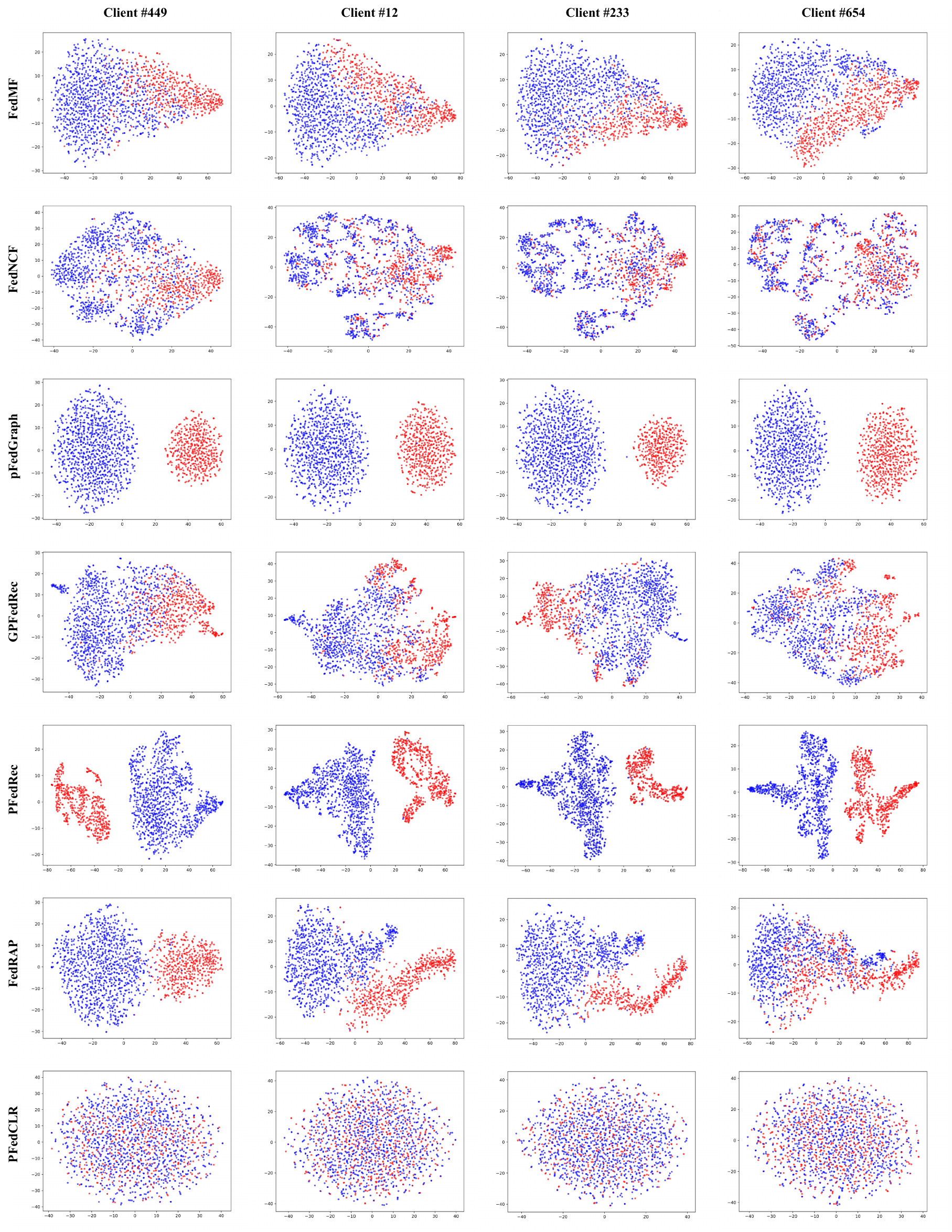} 
\caption{T-SNE visualizations of item embeddings uploaded to the server by different methods. Each row represents a different FR method, while each column represents a randomly sampled client. Items \textcolor{red}{interacted} with and \textcolor{blue}{not interacted} with by the client are indicated in \textcolor{red}{red} and \textcolor{blue}{blue}, respectively.}
\label{pic:vis_item}
\end{figure*}

II) Enhanced privacy protection. Our method can be integrated with Local Differential Privacy (LDP) to further enhance privacy. Table~\ref{tab:exp_frldp} presents the performance degradation of different methods with LDP on ML-100K. Notably, PFedCLR exhibits only a slight drop in both evaluation metrics, while still maintaining a clear performance advantage over other baselines. To balance privacy and utility, we adopt Laplace noise with a moderate strength of $\lambda=0.5$. Furthermore, we evaluate the robustness of PFedCLR under varying noise strengths. As shown in Table~\ref{tab:exp_moreldp}, PFedCLR remains robust when $\lambda \in [0.1, 0.5]$, and even under stronger privacy guarantees, \textit{e.g.}, $\lambda=1$, it still outperforms existing SOTA methods.

\begin{table}[ht]
\centering
\caption{Comparison of performance degradation with and without LDP for different method.}
\resizebox{\textwidth}{!}{
\begin{tabular}{llccccccc}
\toprule
\textbf{Metrics} & \textbf{Methods} & \textbf{FedMF} & \textbf{FedNCF} & \textbf{pFedGraph} & \textbf{GPFedRec} & \textbf{PFedRec} & \textbf{FedRAP} & \textbf{PFedCLR} \\
\midrule
\multirow{3}{*}{\textbf{HR@10}} & w/o LDP & 0.4846 & 0.4252 & 0.6204 & 0.6840 & 0.6702 & 0.8823 & 0.9989 \\
& w/ LDP & 0.4920 & 0.4199 & 0.6193 & 0.6448 & 0.6405 & 0.8441 & 0.9979 \\
& Degradation & - & $\downarrow$ 1.25\% & $\downarrow$ 0.18\% & $\downarrow$ 5.73\% & $\downarrow$ 4.43\% & $\downarrow$ 4.33\% & $\downarrow$ 0.10\% \\
\midrule
\multirow{3}{*}{\textbf{NDCG@10}} & w/o LDP & 0.2723 & 0.2290 & 0.4937 & 0.3982 & 0.3929 & 0.7980 & 0.9225 \\
& w/ LDP & 0.2665 & 0.2237 & 0.4857 & 0.3876 & 0.3624 & 0.7115 & 0.9215 \\
& Degradation & $\downarrow$ 2.13\% & $\downarrow$ 2.31\% & $\downarrow$ 1.62\% & $\downarrow$ 2.66\% & $\downarrow$ 7.76\% & $\downarrow$ 10.84\% & $\downarrow$ 0.11\% \\
\bottomrule
\end{tabular}
}
\label{tab:exp_frldp}
\end{table}

\begin{table}[ht]
\centering
\caption{Results of applying LDP into our method with different noise strength $\lambda$.}
\resizebox{\textwidth}{!}{
\begin{tabular}{llccccccc}
\toprule
\textbf{Dataset} & \textbf{Noise Strength} & \bm{$\lambda=0$} & \bm{$\lambda=0.1$} & \bm{$\lambda=0.2$} & \bm{$\lambda=0.3$} & \bm{$\lambda=0.4$} & \bm{$\lambda=0.5$} & \bm{$\lambda=1.0$} \\
\midrule
\multirow{2}{*}{\textbf{Filmtrust}} & HR@10 & 0.9102 & 0.9092 & 0.9122 & 0.9122 & 0.9072 & 0.9122 & 0.8882 \\
 & NDCG@10 & 0.7798 & 0.7797 & 0.7763 & 0.7776 & 0.7809 & 0.7741 & 0.6965 \\
\midrule
\multirow{2}{*}{\textbf{ML-100K}} & HR@10 & 0.9989 & 0.9979 & 0.9989 & 0.9989 & 0.9979 & 0.9979 & 0.9968 \\
 & NDCG@10 & 0.9225 & 0.9212 & 0.9211 & 0.9215 & 0.9177 & 0.9215 & 0.9059 \\
\midrule
\multirow{2}{*}{\textbf{ML-1M}} & HR@10 & 0.9603 & 0.9608 & 0.9609 & 0.9586 & 0.9594 & 0.9586 & 0.9593 \\
 & NDCG@10 & 0.8402 & 0.8402 & 0.8394 & 0.8385 & 0.8401 & 0.8379 & 0.8358 \\
\midrule
\multirow{2}{*}{\textbf{HetRec2011}} & HR@10 & 0.9522 & 0.9508 & 0.9498 & 0.9475 & 0.9446 & 0.9460  & 0.9319 \\
 & NDCG@10 & 0.8496 & 0.8505 & 0.8475 & 0.8473 & 0.8398 & 0.8405 & 0.8280 \\
\midrule
\multirow{2}{*}{\textbf{LastFm-2K}} & HR@10 & 0.7778 & 0.7770 & 0.7730 & 0.7738 & 0.7581 & 0.7526 & 0.7155 \\
 & NDCG@10 & 0.7164 & 0.7123 & 0.7102 & 0.7064 & 0.6914 & 0.6816 & 0.6509 \\
\bottomrule
\end{tabular}
}
\label{tab:exp_moreldp}
\end{table}

\section{Limitations}\label{app:limitation}
Our method introduces a local dual-function mechanism to simultaneously personalize global item embeddings and calibrate local user embedding. Considering client-side resource constraints, we implement this mechanism with low-rank decomposition, which incurs only a lightweight overhead relative to the backbone model. However, as a locally added component, it still imposes additional computation, which may hinder deployment on extremely resource-constrained clients. Moreover, PFedCLR adopts a simple weighted averaging scheme for global aggregation, \textit{i.e.}, FedAvg. In practice, this may limit the optimality of global updates, thereby constraining the overall recommendation performance. In future work, we plan to explore personalized and calibrated aggregation on the server side, aiming to alleviate client-side burdens.


\end{document}